\newcommand{\norm}[1]{\left\lVert#1\right\rVert}
\newcommand{\abs}[1]{\left\lvert#1\right\rvert} 
\newcommand{\paren}[1]{\left(#1\right)} 
\newcommand{\cbracket}[1]{\left\{#1\right\}} 
\renewcommand{\vec}[1]{\bm{#1}}
\newcommand{\mat}[1]{\bm{#1}}
\newcommand{\tran}{^{\mathstrut\scriptscriptstyle\top}} 
\newcommand{\herm}{^{\mathstrut\scriptscriptstyle H}} 
\newcommand{\conj}{*}
\newcommand{\diag}[1]{\mathrm{diag}\left(#1\right)}
\newcommand{\imag}{\imath}
\newcommand{\stateSpace}[1]{\mathfrak{#1}}
\newcommand{\A}[0]{{\mat{\stateSpace{A}}}}
\newcommand{\N}[0]{\stateSpace{{N}}}
\newcommand{\Nset}[0]{\langle N \rangle}
\newcommand{\rank}{\mathrm{rank }}
\newtheorem{theorem}{Theorem}
\newtheorem{lemma}{Lemma}
 \newcommand\coolleftbrace[2]{#1\left\{\vphantom{\begin{matrix} #2 \end{matrix}}\right.}
\title{On Lossless Feedback Delay Networks}
\author{Sebastian J. Schlecht and Emanu\"{e}l A.P. Habets, ~\IEEEmembership{Senior Member,~IEEE} 
\thanks{S.J. Schlecht and E.A.P. Habets are with the International Audio Laboratories Erlangen (a joint institution of the University of Erlangen-Nuremberg and Fraunhofer IIS), Germany (e-mail: \{sebastian.schlecht,emanuel.habets\}@audiolabs-erlangen.de).}}
\begin{document}

\maketitle
\thispagestyle{fancy}

\begin{abstract}
Lossless Feedback Delay Networks (FDNs) are commonly used as a design prototype for artificial reverberation algorithms. The lossless property is dependent on the feedback matrix, which connects the output of a set of delays to their inputs, and the lengths of the delays. Both, unitary and triangular feedback matrices are known to constitute lossless FDNs, however, the most general class of lossless feedback matrices has not been identified. In this contribution, it is shown that the FDN is lossless for any set of delays, if all irreducible components of the feedback matrix are diagonally similar to a unitary matrix. The necessity of the generalized class of feedback matrices is demonstrated by examples of FDN designs proposed in literature.      
\end{abstract}
\begin{IEEEkeywords}
Feedback Delay Network, Lossless, Diagonal Similarity, Artificial Reverberation.
\end{IEEEkeywords}

\section{Introduction}
\label{sec:Intro}
\IEEEPARstart{F}{eedback} Delay Networks (FDNs) consist of a set of delays and a feedback matrix through which the delay outputs are coupled to the delay inputs (see Fig.~\ref{fig:jot}). FDNs have been proposed by Stautner and Puckette \cite{Stautner1982} and gained popularity as an efficient and flexible way to model and process artificial reverberation \cite{valimaki2012fifty}. Many parametric reverberation algorithms like Schroeder's Cascaded Allpass and Moorer's extension \cite{schroeder1962natural, Moorer:1979hi}, Dattorro's Allpass Feedback Network \cite{Dattorro1997}, Dahl and Jot's Absorbent Allpass FDN \cite{Dahl2000}, can be expressed by an FDN \cite{schlechtParallel,schlecht2014modulation}. Other physical modeling structures like digital waveguide networks (DWN), digital waveguide mesh (DWM), scattering delay networks (SDN) and finite difference time domain (FDTD) simulation are in close relation to the FDN structure \cite{Rocchesso96, karjalainen2004digital, DeSena:2015bb, Murphy:2007hd}.       

It is common practice for FDNs to first design the system to be \emph{lossless}, i.e., all the system poles of the FDN lie on the unit circle \cite{Rocchesso96}. This later ensures a smooth frequency-dependent pole magnitude by simply extending every delay element with an attenuation \cite{Jot1991}. From a practical point of view it is desirable that the delays can be scaled without changing the lossless property as this corresponds to scaling the underlying physical model, e.g. the dimensions of a room model \cite{Rochesso1995} and the corresponding mixing time \cite{Schlecht:2016:mixing}.

From the beginning, unitary and triangular feedback matrices have been found to constitute a lossless FDN independent from the choice of the delays \cite{Stautner1982, Jot1991}. In a more general approach, Gerzon described \emph{unitary networks}, i.e., frequency-dependent unitary matrices of filters, in a feedback loop to constitute a lossless FDN \cite{gerzon1976unitary}. However, a non-unitary, non-triangular feedback matrix can constitute a lossless FDN as well \cite{schlecht2014modulation}. It has been shown that if an FDN is lossless then the feedback matrix has only eigenvalues on the unit circle \cite{Rocchesso96}.

The paper is organized as follows. In Section \ref{sec:Motivation}, we give motivating examples on the relation between feedback matrices and the FDN system poles. Further, we show that the lossless property is dependent on the delays as well. These examples open the question about the definition of the lossless property and its conditions. In Section \ref{sec:LosslessnessCondition}, we derive a new sufficient and necessary condition for FDNs to be lossless independent from the choice of delays. In Section \ref{sec:LosslessTopologies}, some well-known  reverb topologies and their representation as FDNs are given.

\section{Motivation \& Background}
\label{sec:Motivation}
In this section, the FDN is defined and examples on lossless FDNs are given. Then a definition of feedback matrices in lossless FDNs is proposed and the full characterization of lossless FDNs is formulated.

\subsection{Feedback Delay Network}

\begin{figure}[!tb]
\includegraphics[width = 0.46\textwidth]{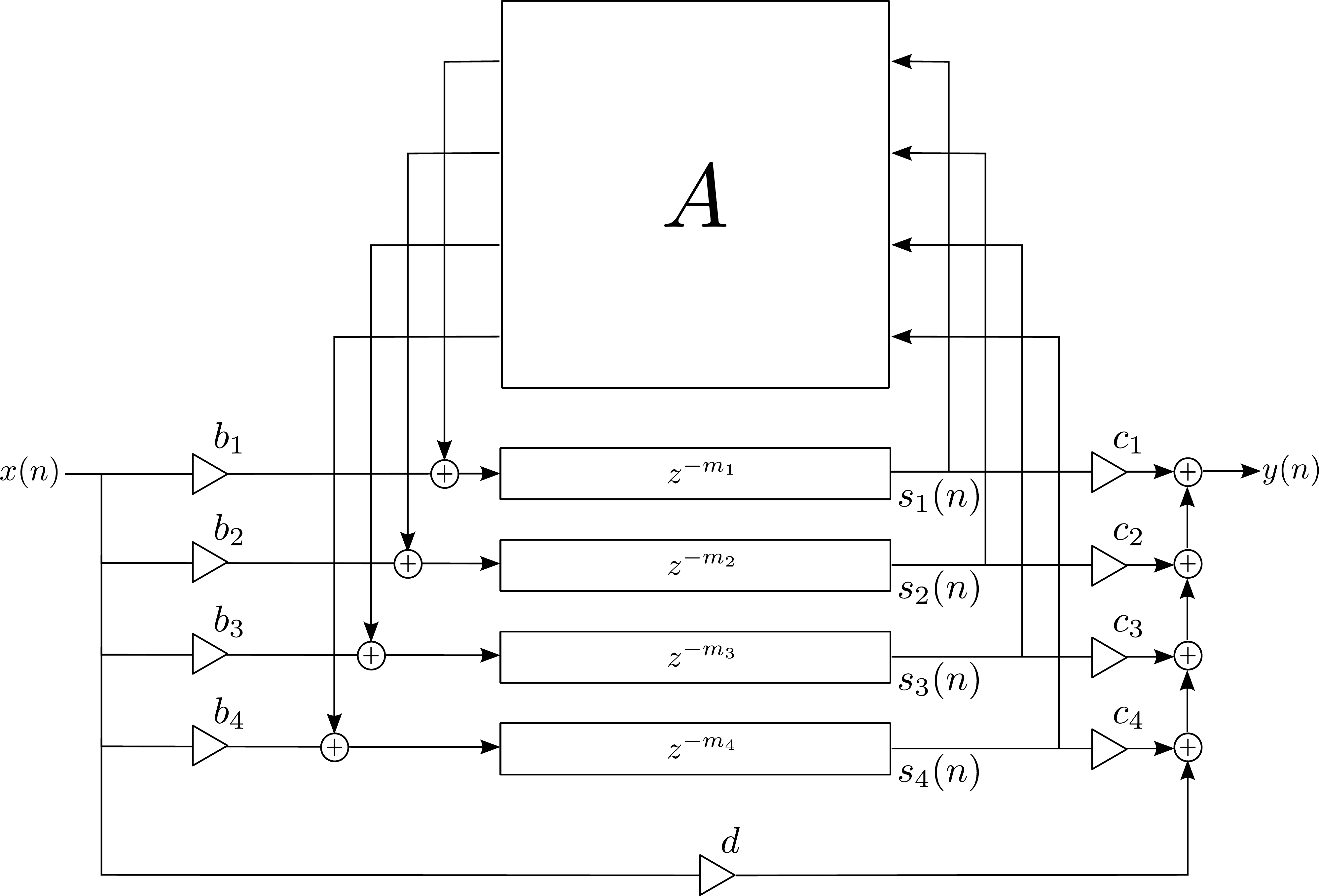}
\caption{Feedback Delay Network (FDN) structure for artificial reverberation as proposed by Stautner and Puckette \cite{Stautner1982} and further developed by Jot and Chaigne \cite{Jot1991}.}
\label{fig:jot}
\end{figure}

The standard FDN is given in time domain by the relation \cite{Rocchesso96}
\begin{subequations}
\begin{align}
	y(n) &=\sum \limits_{i=1}^N c_i \, s_i(n) + d \, x(n) \\
	s_i(n+m_i) &= \sum \limits_{j=1}^N a_{ij} \, s_j(n) + b_i \, x(n),
\end{align}
\label{eq:jotdifference}
\end{subequations}
where $x(n)$ and $y(n)$ are the input and output values respectively and $s_i(n), 1 \leq i \leq N$, are the delay outputs at time sample $n$ \cite{Rocchesso96}. 
The double-indexed $a_{ij}$ form a $N \times N$ \emph{feedback matrix} $\mat{A}$. In a similar manner, we refer to the column vectors of $b_i$'s and $c_i$'s as $\vec{b}$ and $\vec{c}$ respectively. The lengths of the $N$ delays in samples are given by $\vec{m} = [m_1, \dots, m_N] \in \mathbb{N}^N$. The transfer function of the FDN in the z-domain \cite{Rocchesso96} is 

\begin{equation}
H(z) =  \frac{Y(z)}{X(z)} = \vec{c}^T[\mat{D}_{\vec{m}}(z^{-1}) - \mat{A}]^{-1}\vec{b} + d,
\label{eq:fdntrans}
\end{equation}
where $\mat{D}_{\vec{m}}(z) =\diag{[z^{-m_1}, z^{-m_2},\dots,z^{-m_N} ]}$ is the diagonal $N \times N$ \emph{delay matrix}. If the notation is unambiguous, we write $\mat{D}(z)$ instead of $\mat{D}_{\vec{m}}(z)$. The order of the system given in \eqref{eq:fdntrans} is equal to the sum of the delays \cite{Rocchesso96}
\begin{eqnarray}\label{eq:jotorder}
\N &=& \sum_{j=1}^{N} m_j.
\end{eqnarray}
The zeros of the transfer function \eqref{eq:fdntrans} are the roots of the polynomial \cite{Rocchesso96}
\begin{align}
	q_{\mat{A},\vec{b},\vec{c},d,\vec{m}}(z) = \det \left[\mat{A} - \vec{b}\frac{1}{d}\vec{c}\tran - \mat{D}_{\vec{m}}(z^{-1}) \right].	
\label{eq:FDNzerosDefinition}
\end{align}
The poles of the transfer function \eqref{eq:fdntrans} are the roots of the polynomial \cite{Rocchesso96}
\begin{align}
	p_{\mat{A},\vec{m}}(z) &= \det [\mat{D}_{\vec{m}}(z^{-1}) - \mat{A} ].
	\label{eq:charPolyDef}
\end{align}
We call $p_{\mat{A},\vec{m}}(z)$ the generalized characteristic polynomial of $\mat{A}$ with delays $\vec{m}$. Please note that $p_{\mat{A},[1,\dots,1]}(z)$ is the standard characteristic polynomial found in literature \cite{golub2012matrix}. Therefore, the FDN is lossless if $p_{\mat{A},\vec{m}}(z)$ has only unimodular roots, i.e., all roots are of magnitude $1$. We call such a polynomial $p_{\mat{A},\vec{m}}(z)$ to be lossless.

\subsection{Motivating Examples}
  
Let us assume an FDN of dimension $N=2$, with the delays $m_1 = 1, m_2 = 2$ and feedback matrix 
	\begin{eqnarray}
		\mat{A} = \begin{bmatrix}  3 & 2 \\ -4 & -3 \end{bmatrix}. 
	\end{eqnarray}
	The eigenvalue decomposition $\mat{A} = \mat{U}^{-1}\,\mat{\Lambda}\,\mat{U} $ yields
	\begin{eqnarray}
		\mat{U} =  \begin{bmatrix}  2 & 1 \\ 1 & 1 \end{bmatrix}  \textrm{ and }  
		\mat{\Lambda} =  \begin{bmatrix}  1 & 0 \\ 0 & -1 \end{bmatrix},
	\end{eqnarray}
	such that $\mat{U} $ is an invertible matrix of eigenvectors, $\mat{\Lambda}$ is a unimodular diagonal matrix of eigenvalues. The roots of
	\begin{equation}
		\begin{aligned}
		p_{\mat{A},\vec{m}}(z) &= (z^2 - 3)(z^1 + 3) + 8 \\
			&= z^3 - 3z^2 + 3z^1 -1 
		\end{aligned}
	\label{eq:startCounterExamplePositiveCase}
	\end{equation}
	are 
	\begin{align}
		z_{1,2,3} = [1, 1, 1]
	\end{align}
	and therefore the FDN is lossless. However on the contrary,  with $m_2 = 2, m_1 = 1$ the roots of
	\begin{equation}
		\begin{aligned}
			p_{\mat{A},\vec{m}}(z) &= z^3 + 3z^2 - 3z -1  \\
									&= (z^2 + 4 z + z)(z -1)
		\end{aligned}
	\end{equation}
	are
	\begin{align}
		z_{1,2,3} = [1, -2 - \sqrt{3}, -2 + \sqrt{3}]
	\end{align}
	and therefore the FDN is not lossless. Consequently, $\mat{A}$ having unimodular eigenvalues and linearly independent eigenvectors is insufficient for the FDN to be lossless. This example contradicts the claim presented in \cite[Eq.~(29)]{Rocchesso96} that $\mat{A}$ having unimodular eigenvalues and linearly independent eigenvectors is sufficient for $p_{\mat{A},\vec{m}}(z)$ to be lossless. In Appendix \ref{sec:FailedProofRocchesso}, the technical details of the flawed proof in \cite{Rocchesso96} are given.

As an additional note, the stability of an FDN is not even ensured if all eigenvalues of $\mat{A}$ lie within the unit circle. For  
\begin{eqnarray}
\mat{A} = \mat{U}^{-1}\,\frac{\mat{\Lambda}}{2}\,\mat{U} = \begin{bmatrix}  1.5 & 1 \\ -2 & -1.5 \end{bmatrix}
\end{eqnarray}
we have
\begin{align}
p_{\mat{A},\vec{m}}(z) &= z^3 + 1.5z^2 - 1.5z - 0.25 .
\end{align}
The largest eigenvalue magnitude is then approximately $2.145$ and therefore the FDN is unstable.

These examples illustrate that
\begin{itemize}
\item there are non-unitary, non-triangular $\mat{A}$, for which the FDN is lossless.
\item there are $\mat{A}$ with only unimodular eigenvalues, for which the FDN is not lossless.
\item the losslessness of the FDN can be dependent on $\vec{m}$.
\item the eigenvalues of $\mat{A}$ being within the unit circle is not a sufficient condition for stability.
\end{itemize}

In the following, the existing results on lossless FDNs are summarized and a precise definition of the main result is stated.
 
\subsection{Lossless FDN}
The FDN's losslessness should not depend on the choice of the delays $\vec{m}$, but only on the feedback matrix $\mat{A}$. In other words, we want to characterize all matrices $\mat{A}$ such that $p_{\mat{A},\vec{m}}(z)$ is lossless for any choice of $\vec{m} \in \mathbb{N}^N$. We call such a matrix to be \emph{unilossless}.\footnote{Please note that matrices $\mat{A}$ with unimodular eigenvalues and linearly independent eigenvectors have been named lossless in \cite{Rocchesso96}. In this manuscript, this terminology is avoided to reduce confusion and to emphasize the dependency on the choice of the delays.}

\begin{figure}[!tb]
	\centering 
	\includegraphics[width=0.4\textwidth]{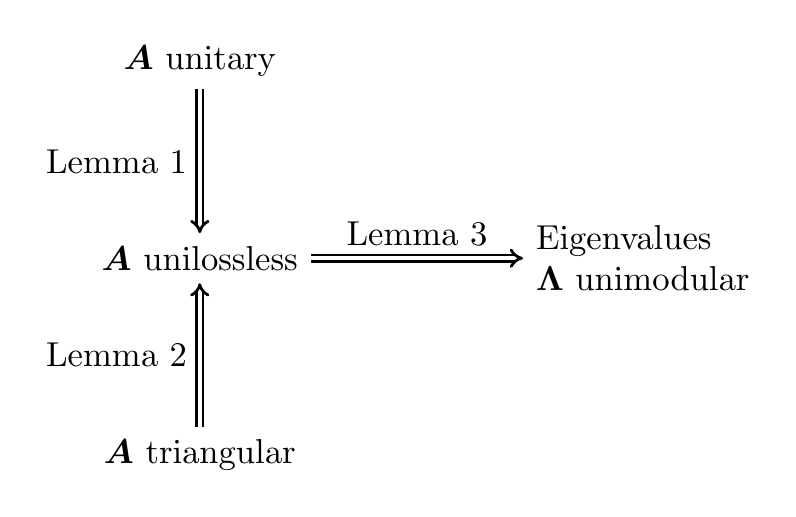}
	\caption{Implication graph for unilossless matrices $\mat{A}$ in the literature.} 
	\label{fig:priorArtDependency}
\end{figure}

It has be shown in \cite{Stautner1982} that $\mat{A}$ being \emph{unitary}, i.e., $\mat{A}\mat{A}\herm = \mat{I}$, is a sufficient condition for $\mat{A}$ to be unilossless. For completeness, we give an alternative proof.

\begin{lemma}
	Any unitary matrix $\mat{A}$ is unilossless.
	\label{th:AnyUnitaryIsUnilossless}
\end{lemma}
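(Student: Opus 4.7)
The approach is to exploit the norm-preserving property of $\mat{A}$ directly on the generalized eigenvalue equation defining the roots of $p_{\mat{A},\vec{m}}(z)$. Observe that $\mat{D}_{\vec{m}}(z^{-1}) = \diag([z^{m_1},\dots,z^{m_N}])$, so $p_{\mat{A},\vec{m}}(z) = \det[\diag(z^{m_i}) - \mat{A}]$ is an ordinary polynomial in $z$. First I would dispose of $z = 0$: since $\mat{A}$ is unitary, $|\det \mat{A}| = 1$, hence $p_{\mat{A},\vec{m}}(0) = (-1)^N \det \mat{A} \neq 0$. Thus every root $z_0$ is nonzero, and the vanishing determinant yields a nontrivial $\vec{v} \in \mathbb{C}^N$ with $\mat{D}_{\vec{m}}(z_0^{-1})\,\vec{v} = \mat{A}\,\vec{v}$.

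The key step is to compare squared Euclidean norms on both sides of this identity. Unitarity of $\mat{A}$ gives $\|\mat{A}\vec{v}\|^2 = \|\vec{v}\|^2 = \sum_{i=1}^N |v_i|^2$, while the diagonal action yields $\|\mat{D}_{\vec{m}}(z_0^{-1})\vec{v}\|^2 = \sum_{i=1}^N |z_0|^{2 m_i}\,|v_i|^2$. Equating the two and rearranging gives
\begin{equation}
\sum_{i=1}^N \bigl(|z_0|^{2 m_i} - 1\bigr)\,|v_i|^2 = 0.
\label{eq:unitaryNormBalance}
\end{equation}
Since $m_i \geq 1$, the map $t \mapsto t^{2 m_i} - 1$ has the sign of $t - 1$ for $t > 0$. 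Hence if $|z_0| > 1$ every summand in \eqref{eq:unitaryNormBalance} is nonnegative and at least one is strictly positive (because $\vec{v} \neq 0$), contradicting the equality; similarly $|z_0| < 1$ forces a strictly negative left-hand side. Therefore $|z_0| = 1$, and as $z_0$ was arbitrary, every root of $p_{\mat{A},\vec{m}}(z)$ is unimodular for every $\vec{m} \in \mathbb{N}^N$, which is precisely unilosslessness.

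There is no real obstacle here: the argument is one line of linear algebra once the generalized eigenvalue formulation is in place. The only subtlety worth flagging in the write-up is the justification of $z_0 \neq 0$ (so that writing $\mat{D}_{\vec{m}}(z_0^{-1})$ and taking norms is meaningful) and the use of $m_i \geq 1$ to ensure the weights $|z_0|^{2m_i} - 1$ are strictly monotone in $|z_0|$; if one allowed $m_i = 0$ the equation \eqref{eq:unitaryNormBalance} would collapse for that coordinate and the conclusion would fail, so this positivity assumption on the delays is essential.
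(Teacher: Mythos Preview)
Your proof is correct and follows essentially the same approach as the paper's: both rule out $z=0$ via $|\det\mat{A}|=1$, then pass to a generalized eigenvector $\vec{v}$ and compare $\|\mat{A}\vec{v}\|$ with $\|\mat{D}_{\vec{m}}(z_0^{-1})\vec{v}\|$ to force $|z_0|=1$ from the sign of $|z_0|^{2m_i}-1$. Your write-up is in fact a touch more direct, since the paper first decomposes $z_0^{-1}=\rho e^{\imath t}$ and handles the unimodular factor $\mat{D}(e^{\imath t})$ separately before bounding $\|\mat{D}(\rho)\vec{v}\|$, whereas you go straight to the identity $\sum_i(|z_0|^{2m_i}-1)|v_i|^2=0$.
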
 

\begin{proof}
	Firstly, $|p_{\mat{A},\vec{m}}(0)|=|\det(−\mat{A})|=1$, hence there is no root of $p_{\mat{A},\vec{m}}$ which is zero.

	Secondly, if $z \neq 0$, such that $p_{\mat{A},\vec{m}}(z)=0$ is equivalent to the existence of $\vec{v}\neq \vec{0}$ such that $\mat{A}\vec{v}=\mat{D}(z^{-1})\vec{v}$. Writing $z^{-1}=\rho e^{\imag t}$, where $\imag = \sqrt{-1}$, we obtain $\mat{A}\vec{v}=\mat{D}(\rho)\mat{D}(e^{\imag t})\vec{v}$ . Using the fact that $\mat{D}(e^{\imag t})$ is unitary, we obtain
		\begin{align}
			\norm{\vec{v}}_2=\norm{\mat{D}(e^{-\imag t})\mat{A}\vec{v}}_2=\norm{\mat{D}(\rho)\vec{v}}_2.
			\label{eq:proofUnitaryLosslessEquivalence}
		\end{align}
	As all delays $\vec{m} \geq 1$, two cases can be identified for all $\vec{v}$ and $\rho>0$:
	\begin{equation}
	\begin{aligned}
		\norm{\mat{D}(\rho)\vec{v}}_2 \geq \rho \norm{\vec{v}}_2 & \quad \textrm{ if } \rho<1 \\
		\norm{\mat{D}(\rho)\vec{v}}_2 \leq \rho \norm{\vec{v}}_2 & \quad \textrm{ if } \rho>1.
	\end{aligned}
	\end{equation}
	the result follows from \eqref{eq:proofUnitaryLosslessEquivalence}.
\end{proof}

A second class of unilossless matrices has been identified by Jot in \cite{Jot1991}, the triangular matrices $\mat{A}_T$ with unimodular main diagonal, i.e., $|t_{ii}| = 1$ for all $i$ and $t_{ij}=0$ either for $i>j$ or $i<j$.

\begin{lemma}
	Any triangular matrix with unimodular main diagonal is unilossless.
	\label{th:unilosslessTriangular}
\end{lemma}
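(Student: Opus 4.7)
The plan is to exploit the triangular structure directly: the matrix $\mat{D}_{\vec{m}}(z^{-1}) - \mat{A}_T$ is itself triangular, because $\mat{D}_{\vec{m}}(z^{-1})$ is diagonal and subtracting a triangular matrix preserves triangularity. Hence the generalized characteristic polynomial factors completely along the diagonal, and losslessness reduces to a statement about roots of scalar binomials.

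More concretely, I would first observe that
\begin{align}
p_{\mat{A}_T,\vec{m}}(z) = \det\!\bigl[\mat{D}_{\vec{m}}(z^{-1}) - \mat{A}_T\bigr] = \prod_{i=1}^{N}\bigl(z^{-m_i} - t_{ii}\bigr).
\end{align}
To interpret this as a polynomial in $z$ (rather than a rational function), I multiply by $z^{\N}$ with $\N = \sum_j m_j$ to get
\begin{align}
z^{\N}\, p_{\mat{A}_T,\vec{m}}(z) = \prod_{i=1}^{N}\bigl(1 - t_{ii}\, z^{m_i}\bigr),
\end{align}
which is a polynomial of degree $\N$, matching the FDN order from \eqref{eq:jotorder}. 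Since $|p_{\mat{A}_T,\vec{m}}(0)| = |\det(-\mat{A}_T)| = \prod_i |t_{ii}| = 1$, the value $z=0$ is not a root, so the roots of $p_{\mat{A}_T,\vec{m}}$ coincide with those of the above product.

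Each factor $1 - t_{ii}\, z^{m_i}$ has exactly $m_i$ roots, namely the $m_i$-th roots of $1/t_{ii}$. Because $|t_{ii}| = 1$ by hypothesis, $|1/t_{ii}| = 1$ as well, and hence all $m_i$ roots lie on the unit circle. Taking the union over $i=1,\dots,N$ accounts for all $\N$ roots of $p_{\mat{A}_T,\vec{m}}(z)$, each unimodular. Since $\vec{m} \in \mathbb{N}^N$ was arbitrary, $\mat{A}_T$ is unilossless.

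There is no real obstacle here: the whole argument rests on recognizing that a diagonal perturbation of a triangular matrix stays triangular, so the determinant splits into one-dimensional pieces. The only point requiring a sentence of care is the distinction between $p_{\mat{A},\vec{m}}(z)$ as written (a Laurent-style expression containing negative powers of $z$) and its polynomial realization $z^{\N} p_{\mat{A},\vec{m}}(z)$; one must verify that clearing denominators does not introduce or remove roots, which follows immediately from $p_{\mat{A}_T,\vec{m}}(0) \neq 0$.
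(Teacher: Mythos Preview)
Your approach is identical to the paper's: exploit triangularity so that the determinant factors along the diagonal, then observe each scalar factor has only unimodular roots. One minor slip: with the paper's convention $\mat{D}_{\vec{m}}(z)=\diag(z^{-m_1},\dots,z^{-m_N})$, the diagonal of $\mat{D}_{\vec{m}}(z^{-1})$ carries $z^{m_i}$, so $p_{\mat{A}_T,\vec{m}}(z)=\prod_i(z^{m_i}-t_{ii})$ is already a genuine polynomial and your Laurent-polynomial clearing step is unnecessary (though harmless).
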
  

\begin{proof}
	Given triangular matrix $\mat{A}_T$ with $|t_{ii}| = 1$, then $p_{\mat{A}_T,\vec{m}}(z) = \prod_{i=1}^N (z^{m_i} - t_{ii})$. The roots of $(z^{m_i} - t_{ii})$ are unimodular and therefore all roots of $p_{\mat{A}_T,\vec{m}}(z)$ are unimodular and $\mat{A}_T$ is unilossless.
\end{proof}

A necessary condition on unilossless matrices has been given in \cite{Rocchesso96}.
\begin{lemma}
	Any unilossless matrix $\mat{A}$ has only unimodular eigenvalues.
	\label{th:losslessUnimodular}
\end{lemma}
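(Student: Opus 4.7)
The plan is to exploit the observation already made in the excerpt that $p_{\mat{A},[1,\dots,1]}(z)$ coincides with the ordinary characteristic polynomial of $\mat{A}$. Concretely, with $\vec{m}=[1,\dots,1]\in\mathbb{N}^N$, the delay matrix becomes $\mat{D}_{\vec{m}}(z^{-1}) = z\mat{I}$, so
\begin{equation*}
p_{\mat{A},[1,\dots,1]}(z) = \det[z\mat{I} - \mat{A}],
\end{equation*}
whose roots are precisely the eigenvalues of $\mat{A}$.

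If $\mat{A}$ is unilossless, then by definition $p_{\mat{A},\vec{m}}(z)$ is lossless for every $\vec{m}\in\mathbb{N}^N$, including this particular choice. Hence all roots of $\det[z\mat{I} - \mat{A}]$ are unimodular, i.e., every eigenvalue of $\mat{A}$ has magnitude one. This is the desired conclusion, so only a one-line argument is required and there is no real obstacle.

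The only thing worth double-checking is that the all-ones delay vector is admissible in the definition of unilossless (the paper restricts $\vec{m}\in\mathbb{N}^N$ with each $m_i \geq 1$, and indeed $[1,\dots,1]$ satisfies this). Given that, the proof is essentially a specialization of the hypothesis to a single well-chosen delay vector, reducing the generalized characteristic polynomial to the classical one.
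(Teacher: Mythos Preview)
Your proof is correct and matches the paper's own argument essentially line for line: specialize the unilossless hypothesis to $\vec{m}=[1,\dots,1]$, observe that $p_{\mat{A},[1,\dots,1]}(z)=\det[z\mat{I}-\mat{A}]$ is the ordinary characteristic polynomial, and conclude that the eigenvalues are unimodular.
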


\begin{proof}
	As $\mat{A}$ is unilossless the roots of $p_{\mat{A},\vec{m}}(z)$ are unimodular for any $\vec{m}$. Therefore, the roots of $p_{\mat{A},[1,\dots,1]}(z)$, which are the eigenvalues of $\mat{A}$, are unimodular.
\end{proof}

The converse of Lemma \ref{th:losslessUnimodular} can be stated with an additional condition that all delays are equal. 

\begin{lemma}
	Let $\mat{A}$ have only unimodular eigenvalues and all delays are equal, i.e., $m_1 = \dots = m_N$, then $p_{\mat{A},\vec{m}}(z)$ is lossless.
	\label{th:losslessEqualDelay}
\end{lemma}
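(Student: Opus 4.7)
The plan is to exploit the fact that equal delays collapse the delay matrix to a scalar multiple of the identity, reducing the generalized characteristic polynomial to the ordinary one after a monomial change of variable.

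First I would write $m := m_1 = \cdots = m_N$, so that $\mat{D}_{\vec{m}}(z^{-1}) = z^{m}\mat{I}$. Substituting this into the definition \eqref{eq:charPolyDef} gives
\begin{align}
p_{\mat{A},\vec{m}}(z) = \det[z^{m}\mat{I} - \mat{A}].
\end{align}
Introducing the auxiliary variable $w = z^{m}$, this is exactly the ordinary characteristic polynomial of $\mat{A}$ evaluated at $w$, i.e., $p_{\mat{A},\vec{m}}(z) = \chi_{\mat{A}}(z^{m})$.

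Next I would use the hypothesis: the roots of $\chi_{\mat{A}}(w)$ are the eigenvalues $\lambda_1,\dots,\lambda_N$ of $\mat{A}$, which are unimodular by assumption. Consequently, a complex number $z$ is a root of $p_{\mat{A},\vec{m}}$ if and only if $z^{m} = \lambda_i$ for some $i$. Taking absolute values yields $|z|^{m} = |\lambda_i| = 1$, hence $|z| = 1$.

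I would conclude that every root of $p_{\mat{A},\vec{m}}(z)$ is unimodular, so $p_{\mat{A},\vec{m}}(z)$ is lossless. There is no real obstacle here; the only subtlety to flag is that the step $p_{\mat{A},\vec{m}}(z) = \chi_{\mat{A}}(z^{m})$ relies critically on all delays being equal, which is precisely the hypothesis that fails in the counterexample of Section~\ref{sec:Motivation}.
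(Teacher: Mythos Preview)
Your proof is correct and follows essentially the same route as the paper: both reduce $p_{\mat{A},\vec{m}}(z)$ to the ordinary characteristic polynomial via the substitution $\zeta = z^{m}$ (which is possible precisely because all delays are equal), and then observe that $|z|^{m} = |\lambda_i| = 1$ forces $|z|=1$. Your write-up is slightly more explicit about the algebra, but the argument is identical.
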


\begin{proof}
	The roots of the characteristic polynomial $\zeta \mat{I} - \mat{A}$ with $\zeta \in \mathbb{C}$ are the eigenvalues of $\mat{A}$ and therefore unimodular. The substitution $z^m_1 = \dots = z^m_N = \zeta$ maps the unit circle onto itself. Consequently, the roots of $p_{\mat{A},\vec{m}}(z)$ are unimodular.	
\end{proof}

For natural reverberation, the echo density usually increases exponentially with time. When the delay lengths are equal, i.e., $m_1 = m_2 = ... = m_N$, there is no increase of the echo density across time such that the reverberation will be unnatural \cite[pp.~109]{Kuttruff2000}.  The delays $\vec{m}$ employed in Lemma~\ref{th:losslessEqualDelay} are therefore a special case. Figure~\ref{fig:priorArtDependency} gives an overview of the known results on unilossless matrices as described in the Lemmas \ref{th:AnyUnitaryIsUnilossless}, \ref{th:unilosslessTriangular} and \ref{th:losslessUnimodular}. The next section is devoted to the derivation of the necessary and sufficient conditions of unilossless matrices.

\section{Unilossless Matrices}
\label{sec:LosslessnessCondition}

Before the main result of the present work on the characterization of unilossless matrices is given, we introduce the notion of independent matrix components. A matrix $\mat{A}$ is \emph{reducible} if there is a permutation matrix $\mat{P}$ with
\begin{equation}
\begin{aligned}
	\mat{P}^{-1} \mat{A} \mat{P} = 
	\begin{bmatrix}
		\mat{E} & \mat{G} \\
		\mat{0}	& \mat{F}
	\end{bmatrix},
	\label{eq:permutationCojugate}
\end{aligned}
\end{equation}
with $\mat{E}$ and $\mat{F}$ square. In other words, a matrix is reducible if it is permutation conjugate to a block triangular matrix.

\begin{theorem}
	A matrix $\mat{A}$ is unilossless if and only if either
	\begin{enumerate}
		\item $\mat{A}$ is reducible and all irreducible components are unilossless or
		\item there exists a non-singular diagonal matrix $\mat{E}$ with $\mat{A} \mat{E} \mat{A}\herm = \mat{E}$.
	\end{enumerate} 
	\label{th:sufficientNecessaryUnilossless}
\end{theorem}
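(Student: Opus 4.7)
The plan is to establish each direction separately. The (``if'') direction will reduce both conditions to Lemma~\ref{th:AnyUnitaryIsUnilossless} via diagonal similarity and the block-triangular factorization of $p_{\mat{A},\vec{m}}$. The (``only if'') direction, particularly in the irreducible case, is the main obstacle; my plan is to extract algebraic constraints on the principal minors of $\mat{A}$ from losslessness and then to invoke a classical theorem relating principal-minor equality to diagonal similarity.

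For the (``if'') direction, I would first treat an irreducible $\mat{A}$ satisfying condition~(2). A preliminary step argues that $\mat{E}$ may be taken real and positive: averaging with $\mat{E}\herm$ (which also solves the equation) yields a real diagonal solution, and applying Perron--Frobenius to the irreducible non-negative matrix with entries $|a_{ik}|^2$ (whose $1$-eigenvector equals $\mathrm{diag}(\mat{E})$) forces the entries to share a common sign, hence positivity after a global sign flip. Writing $\mat{E}=\mat{D}\mat{D}\herm$, the matrix $\mat{U}:=\mat{D}^{-1}\mat{A}\mat{D}$ is unitary; since $\mat{D}_{\vec{m}}(z^{-1})$ is diagonal and so commutes with $\mat{D}$, one obtains $p_{\mat{A},\vec{m}}(z)=p_{\mat{U},\vec{m}}(z)$, and Lemma~\ref{th:AnyUnitaryIsUnilossless} applies. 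Condition~(1) then follows from block-triangular factorization: a permutation conjugate only permutes the entries of $\vec{m}$ in $\mat{D}_{\vec{m}}$, so $p_{\mat{A},\vec{m}}(z)=\prod_j p_{\mat{A}_j,\vec{m}_j}(z)$ over the irreducible diagonal blocks $\mat{A}_j$, each of which is unilossless by hypothesis.

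For the (``only if'') direction, the same factorization reduces the reducible case to showing each irreducible block is unilossless, since the sub-tuples $\vec{m}_j$ for different blocks can be freely prescribed. The core task is the irreducible case. My plan is to perform the multilinear determinant expansion
\begin{align*}
	p_{\mat{A},\vec{m}}(z) = \sum_{S \subseteq \{1,\dots,N\}} (-1)^{N-|S|} \det(\mat{A}_{S^c,S^c}) \prod_{i \in S} z^{-m_i},
\end{align*}
and to choose $\vec{m}$ (e.g.\ $m_i = 2^{i-1}$) making the sums $\sum_{i \in S} m_i$ pairwise distinct, so that the polynomial $P(z) := z^{\sum m_i} p_{\mat{A},\vec{m}}(z)$ has coefficient $(-1)^{|S|}\det(\mat{A}_{S,S})$ at $z^{\sum_{i\in S} m_i}$. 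Since unilosslessness places every root of $P$ on the unit circle, $P$ must be self-inversive. Matching coefficients of complementary monomials and applying Jacobi's identity on complementary minors yields $|\det \mat{A}|=1$ together with
\begin{align*}
	\det(\mat{A}_{S,S}) = \det((\mat{A}^{-H})_{S,S}) \quad \text{for all } S \subseteq \{1,\dots,N\}.
\end{align*}

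The main obstacle is the concluding step, which I plan to handle via the classical principal-minor theorem (in the form of Engel--Schneider): two irreducible square matrices with identical principal minors of every order are diagonally similar, producing a non-singular diagonal $\mat{E}$ with $\mat{A}^{-H}=\mat{E}^{-1}\mat{A}\mat{E}$; rearranging then gives $\mat{A}\mat{E}\mat{A}\herm=\mat{E}$. The delicate points I anticipate are ruling out the ``diagonally similar to the transpose'' branch that generally accompanies such characterizations—leveraging, e.g., that $\mat{A}$ has only unimodular eigenvalues by Lemma~\ref{th:losslessUnimodular}—and confirming that the resulting $\mat{E}$ can be chosen with positive diagonal entries, so that condition~(2) corresponds to the genuine diagonal similarity to a unitary matrix promised in the abstract.
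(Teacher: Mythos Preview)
Your overall architecture---Cohn self-inversiveness $\Rightarrow$ principal-minor identity via Jacobi $\Rightarrow$ Hartfiel--Loewy-type diagonal similarity---is exactly the paper's. But two steps have genuine gaps.

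First, the Perron--Frobenius argument in the ``if'' direction fails. The relation $\mat{A}\mat{E}\mat{A}\herm=\mat{E}$ only says that the vector $(e_i)$ is a $1$-eigenvector of $M=(|a_{ik}|^2)$; nothing forces $1$ to be the Perron root of $M$. For $\mat{A}=\bigl(\begin{smallmatrix}2&3\\-1&-2\end{smallmatrix}\bigr)$ one has $\mat{A}\mat{E}\mat{A}\herm=\mat{E}$ with $\mat{E}=\diag{-3,1}$, while $M=\bigl(\begin{smallmatrix}4&9\\1&4\end{smallmatrix}\bigr)$ has Perron root $7$ and its $1$-eigenvector is $(-3,1)$. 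This $\mat{A}$ is irreducible, has unimodular eigenvalues $\pm 1$, satisfies condition~(2) as literally stated, and is \emph{not} unilossless: $p_{\mat{A},(2,1)}(z)=(z-1)(z^2+3z+1)$. (The paper's own ``if'' argument tacitly works with positive $\mat{E}$ through $\mat{E}=\mat{F}^{-1}\mat{F}^{-\conj}$; your attempted justification of positivity cannot succeed because the literal statement with merely non-singular $\mat{E}$ admits this counterexample.)

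Second---and this is the step you most underestimate---unimodular eigenvalues alone cannot dispose of the ``transpose'' branch of the principal-minor theorem. The same $\mat{A}$ above is a real involution, hence $\mat{A}^{-1}=\mat{A}=\mat{A}^\conj$, so it sits in the conjugate-involutory branch $\mat{A}\mat{E}\mat{A}^\conj=\mat{E}$ with $\mat{E}=\mat{I}$ while having unimodular spectrum. The paper's resolution (Theorem~\ref{th:unilosslessisUnitary}) is not spectral at all: it exploits unilosslessness for \emph{every} $\vec{m}$, applies the rotation $\mat{A}\mapsto\mat{D}_{\vec{m}}(\gamma)\mat{A}$ from Theorem~\ref{th:RotationAlongUnitCircle} with pairwise distinct $m_i$ and a generic unimodular $\gamma$, re-derives the diagonal-similarity relation for the rotated matrix, and compares the two relations entrywise to force $a_{ij}=0$ for $i\neq j$ in the conjugate-involutory branch. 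That rotation trick is the crux of the ``only if'' direction and is missing from your plan. A secondary point: the Loewy form of the principal-minor theorem for $N\geq 4$ carries a rank-$2$ hypothesis on off-diagonal blocks; the paper removes it for the specific pair $(\mat{A},\mat{A}^{-\conj})$ via a direct $4\times4$ computation in Appendix~\ref{sec:ProofOfHartfielLoewyInverse}, which your invocation of Engel--Schneider does not cover.
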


In this section, we explore the sufficient and necessary conditions for the feedback matrix to be unilossless, and in particular give a proof for Theorem \ref{th:sufficientNecessaryUnilossless}. Figure~\ref{fig:theoremDependency} sketches the dependencies of the following theorems and may guide the reader towards the main result.

\subsection{Generalized Characteristic Polynomial}
To derive the conditions on the generalized characteristic polynomial to have only unimodular roots, it is helpful to give a closed form of $p_{\mat{A},\vec{m}}(z)$. This closed form of the generalized characteristic polynomial $p_{\mat{A},\vec{m}}(z)$ in \eqref{eq:charPolyDef} was proven by the authors in \cite{schlecht2014modulation} in terms of principal minors. A principal minor $\det \mat{A}(I)$ of a matrix $\mat{A}$ is the determinant of a submatrix $\mat{A}(I)$ with equal row and column indices $I \subset \Nset$.  $\abs{I}$ indicates the cardinality of set $I$. The set of all indices is denoted by $\Nset = \{1,2,\dots,N\}$ and $I^c$ is the relative complement in $\Nset$, i.e., $I^c = \Nset \setminus I$. 

\begin{theorem}
For a given feedback matrix $\mat{A}$ and delays $\vec{m}$, the generalized characteristic polynomial $p_{\mat{A},\vec{m}}$ is given by
\begin{align}
	p_{\mat{A},\vec{m}}(z) &= \sum_{k = 0}^{\N} c_k \, z^k   \\
	c_k &= 	
			\begin{cases} 
				\sum_{I \in I_k} (-1)^{N - |I|}\det \bold{A}(I^c),  & \text{for } I_k \neq \emptyset\\ 
  				0, & \text{otherwise }
			\end{cases}
\label{eq:cPcoefficients}			
\end{align}
where $I_k = \{I \subset \Nset | \sum_{i \in I} m_i\ = k\}$ is the set of index sets $I$ such that the delays with indices in $I$ sum up to $k$. The system order $\N$ is defined in \eqref{eq:jotorder}. 
\label{th:charPoly}
\end{theorem}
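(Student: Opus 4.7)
The plan is to expand $\det[\mat{D}_{\vec{m}}(z^{-1})-\mat{A}]$ by multilinearity of the determinant in the rows. Since $\mat{D}_{\vec{m}}(z^{-1}) = \diag{[z^{m_1},\dots,z^{m_N}]}$, row $i$ of the matrix splits as the sum of the two row vectors $z^{m_i}\vec{e}_i\tran$ and $-\vec{a}_i\tran$, where $\vec{e}_i$ is the $i$-th standard basis vector and $\vec{a}_i\tran$ denotes the $i$-th row of $\mat{A}$. Multilinearity then rewrites the determinant as a sum of $2^N$ determinants, one for each subset $I \subseteq \Nset$ recording the rows from which the diagonal summand $z^{m_i}\vec{e}_i\tran$ is taken (with $-\vec{a}_i\tran$ chosen for $i \in I^c$).

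For each such $I$, I would evaluate the corresponding term directly. The rows indexed by $I$ each carry a single nonzero entry $z^{m_i}$ in column $i$. A simultaneous row-and-column permutation bringing the indices of $I$ to the front leaves the determinant unchanged because the row and column permutations have identical sign, whose product is $1$. The resulting matrix is block upper triangular, with top-left block $\diag{[z^{m_i} : i \in I]}$, zero top-right block, and bottom-right block $-\mat{A}(I^c)$. Its determinant therefore equals
\begin{equation*}
z^{\sum_{i\in I} m_i}\,(-1)^{N-|I|}\det \mat{A}(I^c),
\end{equation*}
with the sign collected from the scalar factor pulled out of $\det(-\mat{A}(I^c))$.

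Summing over all $I \subseteq \Nset$ and regrouping by the value $k = \sum_{i\in I} m_i$ then yields
\begin{equation*}
p_{\mat{A},\vec{m}}(z) \;=\; \sum_{k=0}^{\N}\Bigl(\,\sum_{I \in I_k}(-1)^{N-|I|}\det \mat{A}(I^c)\Bigr)\,z^k,
\end{equation*}
which is the claimed expression; when $I_k = \emptyset$ the inner sum is vacuous, giving $c_k = 0$ in agreement with the stated case distinction. The boundary cases $I = \emptyset$ and $I = \Nset$ recover the expected constant term $(-1)^N \det \mat{A}$ and leading term $z^{\N}$, respectively. I do not expect a real obstacle here: the argument is essentially bookkeeping, and the only delicate point is verifying that the simultaneous row-column permutation contributes no net sign, leaving $(-1)^{|I^c|} = (-1)^{N-|I|}$ as the only source of sign in each term.
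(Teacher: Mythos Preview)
Your argument is correct. The multilinearity expansion of the determinant over the row decomposition $z^{m_i}\vec{e}_i\tran + (-\vec{a}_i\tran)$ is exactly the right tool, and the evaluation of each summand via a simultaneous row--column permutation is clean. One small terminology slip: with the top-right block equal to zero, the permuted matrix is block \emph{lower} triangular, not upper; this has no bearing on the conclusion since the determinant of a block triangular matrix (upper or lower) is the product of the diagonal block determinants.

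As for comparison with the paper: the paper does not actually prove this theorem in the body of the text. It states the result and cites the authors' earlier work \cite{schlecht2014modulation} for the proof. Your self-contained derivation via row multilinearity is the natural and standard way to establish such a principal-minor expansion, and it is almost certainly what underlies the cited argument as well.
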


There are $2^N$ different principal minors, and for many choices of $\vec{m}$ the coefficients $c_k$ are sums of multiple minors. Still, the principal minors can be assigned uniquely to the coefficients, for example with exponential delays $\vec{m} = [1, 2, 4, \dots, 2^{N-1}]$.

However, for real matrices $\mat{A}$ there are no more than $N^2$ degrees of freedom defined by the square matrix entries, which implies that there is a strong dependency between the principal minors for larger matrices. In fact, it can be shown that there are only $N^2 - N + 1$ degrees of freedom \cite{stouffer1924independence} because of a similarity invariance which we discuss in following Section~\ref{sec:ExtendingTheSetOfUnilossless}.

\begin{figure}[!tb]
	\centering 
	\includegraphics[width=0.5\textwidth]{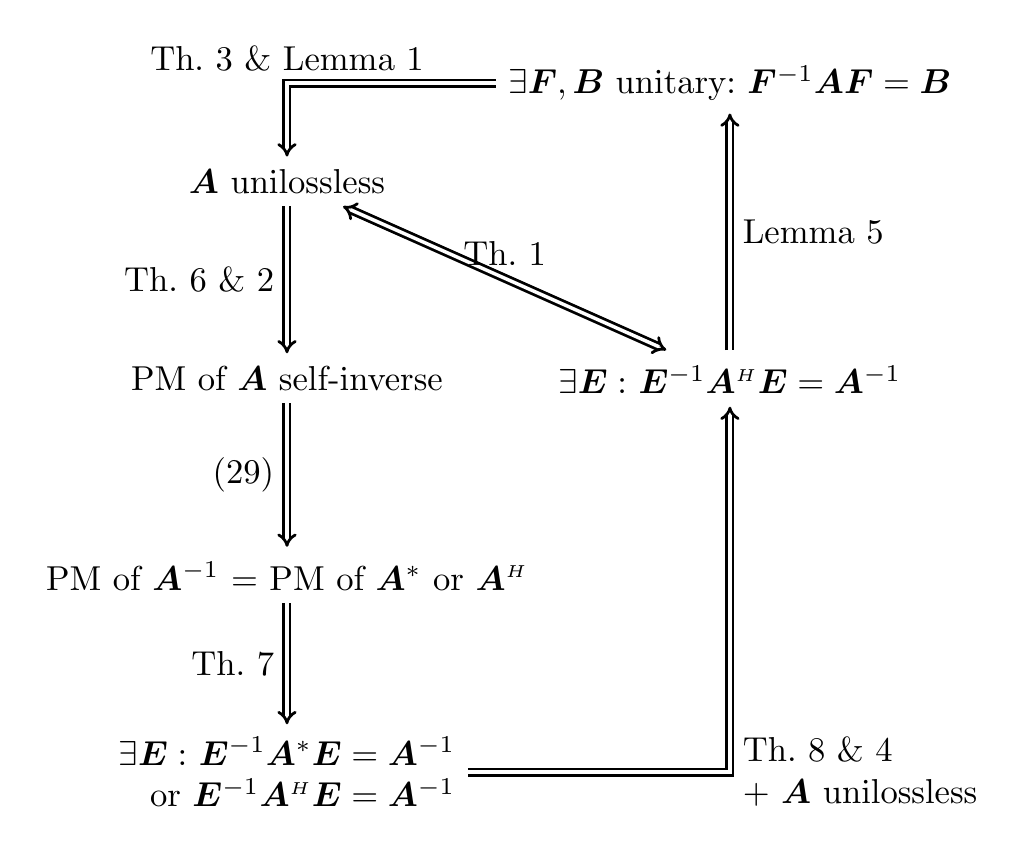}
	\caption{Implication graph of theorems for unilossless and irreducible $\mat{A}$. All $\mat{E}$ and $\mat{F}$ are non-singular diagonal matrices.} 
	\label{fig:theoremDependency}
\end{figure} 

\subsection{Extending the Set of Unilossless Matrices}
\label{sec:ExtendingTheSetOfUnilossless}

We will discuss three techniques to extend the known class of unilossless matrices, which are described by the Lemmas \ref{th:AnyUnitaryIsUnilossless}, \ref{th:unilosslessTriangular} and \ref{th:losslessUnimodular}.  

\subsubsection{Diagonal Similarity} We show that conjugating a unilossless matrix with a diagonal matrix does not alter the principal minors and therefore also not the unilossless property.
\begin{theorem}
	Given a non-singular diagonal matrix $\mat{E}$ and any matrix $\mat{A}$. The matrices $\mat{A}$ and $\mat{E}^{-1} \mat{A} \mat{E}$ have the same principal minors.
	\label{th:conjugateDiagonalPMInvariance}
\end{theorem}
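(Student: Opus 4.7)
The plan is to unfold the definitions and show that the submatrix of $\mat{E}^{-1}\mat{A}\mat{E}$ corresponding to any index set $I\subset\Nset$ is itself a diagonal similarity transform of $\mat{A}(I)$; since determinants are invariant under similarity, the claim follows immediately.

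First I would write $\mat{E} = \diag(e_1,\dots,e_N)$ with all $e_i \neq 0$, and set $\mat{B} = \mat{E}^{-1} \mat{A} \mat{E}$. The entries of $\mat{B}$ are then $b_{ij} = e_i^{-1} a_{ij} e_j$, which I would verify by direct multiplication. This elementwise formula is the only computation needed.

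Next, I would fix an arbitrary index set $I \subset \Nset$ and consider the principal submatrix $\mat{B}(I)$. Because the scaling factor $e_i^{-1}$ on row $i$ and the factor $e_j$ on column $j$ depend only on indices that lie in $I$, restricting to rows and columns in $I$ yields
\begin{align}
    \mat{B}(I) = \mat{E}(I)^{-1}\, \mat{A}(I)\, \mat{E}(I),
\end{align}
where $\mat{E}(I)$ is the corresponding non-singular diagonal submatrix. Taking determinants and using multiplicativity gives $\det \mat{B}(I) = \det \mat{E}(I)^{-1} \det \mat{A}(I) \det \mat{E}(I) = \det \mat{A}(I)$. Since $I$ was arbitrary, every principal minor of $\mat{B}$ equals the corresponding principal minor of $\mat{A}$.

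There is no real obstacle here; the only subtlety is the observation that conjugation by a \emph{diagonal} matrix commutes with the restriction to a principal submatrix, which would fail for a general similarity transform. I would emphasize this point briefly so that the reader sees why diagonality is essential to the argument.
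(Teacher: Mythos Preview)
Your proof is correct and more elementary than the paper's. The paper does not compute $\mat{B}(I)$ directly; instead it shows that the \emph{generalized characteristic polynomials} coincide,
\[
p_{\mat{E}^{-1}\mat{A}\mat{E},\vec{m}}(z)=\det\bigl[\mat{D}(z^{-1})-\mat{E}^{-1}\mat{A}\mat{E}\bigr]=\det\mat{E}^{-1}\det\bigl[\mat{D}(z^{-1})-\mat{A}\bigr]\det\mat{E}=p_{\mat{A},\vec{m}}(z),
\]
using that $\mat{E}$ commutes with the diagonal delay matrix $\mat{D}(z^{-1})$. Equality of all principal minors then follows from Theorem~\ref{th:charPoly}, since for the exponential delays $\vec{m}=[1,2,4,\dots,2^{N-1}]$ each coefficient $c_k$ of $p_{\mat{A},\vec{m}}$ is a single principal minor (up to sign). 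Your argument bypasses this detour by observing that diagonal conjugation commutes with restriction to a principal submatrix, $\mat{B}(I)=\mat{E}(I)^{-1}\mat{A}(I)\mat{E}(I)$, and then invoking multiplicativity of the determinant. This is self-contained and does not rely on Theorem~\ref{th:charPoly}; the paper's route, on the other hand, is more in tune with the FDN machinery and makes the invariance of the full polynomial $p_{\mat{A},\vec{m}}$ (not just its coefficients) explicit, which is what is actually used downstream.
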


\begin{proof}
	We show that the generalized characteristic polynomials of $\mat{A}$ and $\mat{E}^{-1} \mat{A} \mat{E}$ are identical. 
	\begin{equation}
\begin{aligned}
	p_{\mat{E}^{-1} \mat{A} \mat{E},\vec{m}}(z) 
	&= \det \left[\mat{D}(z^{-1}) - \mat{E}^{-1} \mat{A} \mat{E} \right] \\
	&= \det \left[\mat{E}^{-1} \mat{D}(z^{-1}) \mat{E} - \mat{E}^{-1} \mat{A} \mat{E}  \right] \\
	&= \det \mat{E}^{-1} \det \left[ \mat{D}(z^{-1}) - \mat{A}  \right] \det \mat{E} \\
	&= p_{\mat{A},\vec{m}}(z) 
\end{aligned}
	\end{equation}
\end{proof}

Consequently, all diagonal conjugations of a unilossless matrix are unilossless again. As we have noted earlier, the degree of freedom of the principal minors for real matrices is $N^2$ minus the invariance of the diagonal conjugation of $N-1$ degrees. 

It is interesting to note that the diagonal similarity is rather a property of the system zeros than of the system poles. As noted in Theorem \ref{th:conjugateDiagonalPMInvariance}, the principal minors $\mat{A}$ and $\mat{E}^{-1} \mat{A} \mat{E}$ are equal and therefore also the system poles are invariant under diagonal similarity. From \eqref{eq:FDNzerosDefinition}, we have
\begin{equation}
\begin{aligned}
	&q_{\mat{E}^{-1} \mat{A} \mat{E},\vec{b},\vec{c},d,\vec{m}}(z) \\
	&= \det \left[\mat{E}^{-1} \mat{A} \mat{E} - \vec{b}\frac{1}{d}\vec{c}\tran - \mat{D}(z^{-1}) \right] \\
	&= \det \mat{E}^{-1} \det \left[\mat{A} - \mat{E} \vec{b}\frac{1}{d}\vec{c}\tran \mat{E}^{-1} - \mat{D}(z^{-1}) \right] \det \mat{E} \\
	&= \det \left[\mat{A} - \vec{b}'\frac{1}{d} {\vec{c'}}\tran- \mat{D}(z^{-1}) \right] 
	= q_{\mat{A},\vec{b}',\vec{c}',d,\vec{m}}(z)
\end{aligned}
\end{equation}
with $\vec{b}' = \mat{E} \vec{b}$ and $\vec{c}' = (\mat{E}^{-1})\tran \vec{c}$. Therefore, the diagonal conjugation effects only the system feed and observation gains $\vec{b}$ and $\vec{c}$, respectively.

\subsubsection{Rotation along the Unit Circle}
\label{sec:dummy}
A further invariance can be derived from the fact that the generalized characteristic polynomial can be rotated along the unit circle without changing the root magnitudes. Such a rotation can be realized by the following theorem.

\begin{theorem}
	Given any feedback matrix $\mat{A}$ and any delay matrix $\mat{D}(z)$. The generalized characteristic polynomial $p_{\mat{D}(\gamma)\mat{A},\vec{m}}(z)$ is lossless for any $\gamma \in \mathbb{C}$ with $|\gamma| = 1$ if and only if $p_{\mat{A},\vec{m}}(z)$ is lossless.
	\label{th:RotationAlongUnitCircle}
\end{theorem}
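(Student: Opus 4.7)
The plan is to reduce the theorem to a straightforward change of variables $z \mapsto \gamma z$ inside $p_{\mat{A},\vec{m}}$, exploiting the multiplicative structure of the delay matrix. Specifically, from the definition $\mat{D}_{\vec{m}}(z) = \diag{[z^{-m_1},\dots,z^{-m_N}]}$ one has the key identity $\mat{D}(a)\mat{D}(b) = \mat{D}(ab)$ for any nonzero scalars $a,b$, because each diagonal entry satisfies $a^{-m_i}b^{-m_i} = (ab)^{-m_i}$.

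First, I would factor $\mat{D}(\gamma)$ out of the matrix appearing in \eqref{eq:charPolyDef} applied to $\mat{D}(\gamma)\mat{A}$:
\begin{equation*}
\mat{D}(z^{-1}) - \mat{D}(\gamma)\mat{A}
= \mat{D}(\gamma)\bigl[\mat{D}(\gamma^{-1})\mat{D}(z^{-1}) - \mat{A}\bigr]
= \mat{D}(\gamma)\bigl[\mat{D}((\gamma z)^{-1}) - \mat{A}\bigr].
\end{equation*}
Taking determinants and noting that $\det\mat{D}(\gamma) = \gamma^{-\N}$ is a nonzero constant (unimodular when $|\gamma|=1$), I would then obtain the identity
\begin{equation*}
p_{\mat{D}(\gamma)\mat{A},\vec{m}}(z) = \gamma^{-\N}\, p_{\mat{A},\vec{m}}(\gamma z).
\end{equation*}

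From this identity the theorem follows immediately. The roots of $p_{\mat{D}(\gamma)\mat{A},\vec{m}}(z)$ are precisely the values $z = w/\gamma$ where $w$ ranges over the roots of $p_{\mat{A},\vec{m}}$, and since $|\gamma|=1$ we have $|z|=|w|$. Hence all roots of $p_{\mat{D}(\gamma)\mat{A},\vec{m}}$ are unimodular if and only if all roots of $p_{\mat{A},\vec{m}}$ are, which is the ``if and only if'' statement of the theorem.

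There is no real obstacle here; the only substantive step is verifying the multiplicative identity $\mat{D}(a)\mat{D}(b)=\mat{D}(ab)$, which is immediate from the diagonal structure. I would also briefly remark that the argument works for any $\gamma\in\mathbb{C}\setminus\{0\}$ in the sense that roots are merely rescaled by $\gamma^{-1}$; it is the additional assumption $|\gamma|=1$ that ensures root magnitudes, and therefore the lossless property, are preserved.
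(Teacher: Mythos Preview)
Your proof is correct and follows essentially the same approach as the paper: both exploit the multiplicative identity $\mat{D}(a)\mat{D}(b)=\mat{D}(ab)$ to relate $p_{\mat{D}(\gamma)\mat{A},\vec{m}}(z)$ and $p_{\mat{A},\vec{m}}(\gamma z)$, and then observe that multiplication of the variable by a unimodular $\gamma$ preserves root magnitudes. If anything, your version is slightly more careful, since you keep the determinant factor $\det\mat{D}(\gamma)=\gamma^{-\N}$ explicit, whereas the paper's chain of equalities silently drops this (harmless, nonzero) constant.
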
 

\begin{proof}
	Substitute $z = \zeta \gamma$, where $\zeta \in \mathbb{C}$. Then
	\begin{equation}
	\begin{aligned}
		p_{\mat{A},\vec{m}}(z) &= p_{\mat{A},\vec{m}}(\zeta\gamma) \\
			&= \det [\mat{D}(\zeta^{-1}\gamma^{-1}) - \mat{A} ] \\
			&= \det [\mat{D}(\gamma^{-1}) \mat{D}(\zeta^{-1})  - \mat{A} ] \\
			&= \det [\mat{D}(\zeta^{-1}) - \mat{D}(\gamma) \mat{A} ] \\
			&= p_{\mat{D}(\gamma)\mat{A},\vec{m}}(\zeta)
	\end{aligned}	
	\end{equation}
	Hence, $r \in \mathbb{C}$ is a root of $p_{\mat{D}(\gamma)\mat{A},\vec{m}}(z)$ if and only if $r \gamma$ is a root of $p_{\mat{A},\vec{m}}(z)$.	
\end{proof}

This invariance does not translate to unilossless matrices $\mat{A}$ as $\mat{D}(\gamma)$ is dependent on the choice of the delays $\vec{m}$. Theorem 4 nevertheless plays an important role later in the proof the main Theorem~\ref{th:sufficientNecessaryUnilossless}.

\subsubsection{Reducible Matrices}
We show here that the losslessness of systems which can be decomposed into independent components can be determined from its parts. We show that the lossless property of reducible matrices can be deduced from the lossless property of the diagonal blocks.

\begin{theorem}
	Given a reducible matrix $\mat{A}$ and the permutation conjugate like in \eqref{eq:permutationCojugate}. Then there is a partition $\vec{m}_1$, $\vec{m}_2$ of $\vec{m}$ with 
	\begin{align}
		p_{\mat{A},\vec{m}}(z) = p_{\mat{E},\vec{m}_1}(z) p_{\mat{F},\vec{m}_2}(z).
	\end{align}
	\label{th:ReducibleUnilossless}
\end{theorem}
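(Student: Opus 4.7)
The plan is to exploit the fact that the determinant is invariant under conjugation by a permutation matrix, and that such a conjugation preserves diagonality of the delay matrix (merely reordering the diagonal entries), which is exactly what supplies the partition $\vec{m}_1, \vec{m}_2$.

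First I would write, using \eqref{eq:charPolyDef} and inserting $\mat{P}\mat{P}^{-1} = \mat{I}$,
\begin{equation*}
p_{\mat{A},\vec{m}}(z) = \det[\mat{D}_{\vec{m}}(z^{-1}) - \mat{A}] = \det[\mat{P}^{-1}\mat{D}_{\vec{m}}(z^{-1})\mat{P} - \mat{P}^{-1}\mat{A}\mat{P}].
\end{equation*}
The second factor is in block-triangular form by the hypothesis \eqref{eq:permutationCojugate}. For the first factor, I would observe that the permutation $\mat{P}$ merely reorders the diagonal entries of $\mat{D}_{\vec{m}}(z^{-1})$, so that $\mat{P}^{-1}\mat{D}_{\vec{m}}(z^{-1})\mat{P} = \mat{D}_{\vec{m}'}(z^{-1})$ for the permuted delay vector $\vec{m}' = \mat{P}^{-1}\vec{m}$. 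If the block $\mat{E}$ has size $N_1 \times N_1$ and $\mat{F}$ has size $N_2 \times N_2$, I would define $\vec{m}_1$ as the first $N_1$ entries of $\vec{m}'$ and $\vec{m}_2$ as the last $N_2$ entries; by construction these form a partition of the multiset of entries of $\vec{m}$.

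Assembling the pieces, I would then write
\begin{equation*}
p_{\mat{A},\vec{m}}(z) = \det\begin{bmatrix} \mat{D}_{\vec{m}_1}(z^{-1}) - \mat{E} & -\mat{G} \\ \mat{0} & \mat{D}_{\vec{m}_2}(z^{-1}) - \mat{F} \end{bmatrix},
\end{equation*}
and conclude by applying the block-triangular determinant formula:
\begin{equation*}
p_{\mat{A},\vec{m}}(z) = \det[\mat{D}_{\vec{m}_1}(z^{-1}) - \mat{E}] \cdot \det[\mat{D}_{\vec{m}_2}(z^{-1}) - \mat{F}] = p_{\mat{E},\vec{m}_1}(z)\, p_{\mat{F},\vec{m}_2}(z),
\end{equation*}
which is the claim.

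There is no real obstacle here; the only subtle point worth stating explicitly is the commutation $\mat{P}^{-1}\mat{D}_{\vec{m}}(z^{-1})\mat{P} = \mat{D}_{\mat{P}^{-1}\vec{m}}(z^{-1})$, which relies on the fact that $\mat{D}_{\vec{m}}(z^{-1})$ is diagonal. The upper-right block $\mat{G}$ plays no role in the final determinant, since the block-triangular structure makes $\mat{G}$ invisible to $\det$.
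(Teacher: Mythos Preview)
Your proof is correct and follows essentially the same route as the paper: conjugate by $\mat{P}$, use that $\mat{P}^{-1}\mat{D}_{\vec{m}}(z^{-1})\mat{P}$ is again a diagonal delay matrix with permuted delay vector, and then apply the block-triangular determinant identity. The paper states the commutation step as a separate identity $p_{\mat{P}^{-1}\mat{A}\mat{P},\vec{m}}(z)=p_{\mat{A},\vec{m}\mat{P}^{-1}}(z)$ (citing \cite{Stuart:1994gt}) before combining with the block determinant formula, whereas you carry it out in one pass, but the substance is identical.
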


\begin{proof}
We show that the conjugate permutation of the matrix is equivalent to a permutation of the delays:
	\begin{equation}
\begin{aligned}
	p_{\mat{P}^{-1} \mat{A} \mat{P},\vec{m}}(z) 
		&= \det \left[\mat{D}_{\vec{m}}(z^{-1}) - \mat{P}^{-1} \mat{A} \mat{P} \right] \\
		&= \det \left[\mat{P}^{-1}\mat{D}_{\vec{m}\mat{P}^{-1}}(z^{-1}) \mat{P} - \mat{P}^{-1} \mat{A} \mat{P}  \right] \label{eq:permutationCommute} \\
		&= p_{\mat{A},\vec{m} \mat{P}^{-1} }(z),
\end{aligned}
	\end{equation}
where the commutation in \eqref{eq:permutationCommute} follows from Lemma 2.2 in \cite{Stuart:1994gt}. Further, we show that the generalized characteristic polynomial can be decomposed as	
	\begin{equation}
\begin{aligned}	
	p_{\mat{P}^{-1} \mat{A} \mat{P},\vec{m}\mat{P}}(z) 
		&= \det \left[\mat{D}_{\vec{m}\mat{P}}(z^{-1}) - \mat{P}^{-1} \mat{A} \mat{P} \right] \\
		&= \det \left[\mat{D}_{\vec{m}\mat{P}}(z^{-1}) -  \begin{bmatrix}
		\mat{E} & \mat{G} \\
		\mat{0}	& \mat{F}
	\end{bmatrix} \right] \\
	 	&= p_{\mat{E},\vec{m}_1}(z) p_{\mat{F},\vec{m}_2}(z),
\end{aligned}
	\end{equation}
	where $[\vec{m}_1, \vec{m}_2] = \vec{m}\mat{P}$ and follows from the determinant formula for block matrices \cite{brualdi1983determinantal}. Together the proof concludes with
	\begin{align}
		p_{\mat{A},\vec{m}}(z) = p_{\mat{P}^{-1} \mat{A} \mat{P},\vec{m}\mat{P}}(z) = p_{\mat{E},\vec{m}_1}(z) p_{\mat{F},\vec{m}_2}(z).
	\end{align}
	
	 \end{proof}

Consequently, the matrix $\mat{A}$ is unilossless if and only if $\mat{E}$ and $\mat{F}$ are unilossless. This also gives a more general argument for the unilossless triangular matrices characterized in Lemma \ref{th:unilosslessTriangular}. For the intuition, we remark that the diagonal blocks correspond to strongly connected components of the signal graph, i.e., a subset of delays where every delay is connected to each other by a chain of non-zero gains in the feedback matrix \cite{Cormen:2001uw}. The off-diagonal blocks correspond to connections between the strongly connected components. Because of the block triangular form, once a signal leaves a strongly connected component it never returns, which explains the independent stability property of these components.

\subsection{Characterizing the Set of Unilossless Matrices}
In this section, we develop the necessary conditions for unilossless matrices, which eventually results in a full characterization.

\subsubsection{Roots on the Unit Circle}
The starting point is a classical theorem on polynomials which have all roots on the unit circle. In the following, $p'$ denotes the derivative of the polynomial $p$.

\begin{theorem}[A. Cohn, \cite{Cohn:1922fk}]
All zeros of a polynomial $p$ lie on the unit circle if and only if
\begin{enumerate}
\item $p$ is self-inversive 
\item all zeros of $p'$ lie in or on the unit circle.
\end{enumerate}	
\label{th:reciprocalCohn}
\end{theorem}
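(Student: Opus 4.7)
The plan is to prove the two directions separately, since they rest on quite different machinery.

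For the forward direction, assume every zero of $p$ lies on the unit circle. To obtain self-inversivity, I would observe that the involution $z\mapsto 1/\bar z$ preserves the unit circle setwise and therefore permutes the zeros of $p$ while preserving multiplicities. Consequently, the reciprocal polynomial $p^*(z) := z^n\overline{p(1/\bar z)}$, whose zeros are precisely the reciprocals $1/\bar\zeta$ of the zeros $\zeta$ of $p$, has the same zero multiset as $p$. Since two degree-$n$ polynomials with identical zeros agree up to a scalar, $p^* = \lambda p$; comparing leading and constant coefficients (both of modulus $|a_n|$, because the product of unimodular zeros is unimodular) gives $|\lambda|=1$. The claim about $p'$ then follows from the Gauss--Lucas theorem: every zero of $p'$ lies in the convex hull of the zeros of $p$, which is contained in the closed unit disk.

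For the reverse direction, assume $p$ is self-inversive, say $p = \omega p^*$ with $|\omega|=1$, and that all zeros of $p'$ lie in the closed unit disk. Self-inversivity makes the zero set of $p$ invariant under $\zeta \mapsto 1/\bar\zeta$, so zeros strictly inside the unit disk pair with zeros strictly outside; let $k$ denote the common count. It suffices to show $k=0$. My starting point is the coefficient identity $(p')^*(z) = n\,p^*(z) - z(p^*)'(z)$, valid for any degree-$n$ polynomial, which for self-inversive $p$ simplifies to $(p')^*(z) = \omega^{-1}\bigl(n\,p(z) - z p'(z)\bigr)$. Reciprocation swaps interior and exterior zero counts, so the hypothesis on $p'$ translates into $(p')^*$ having no zeros strictly inside the unit disk. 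Combining this with the identity and an argument-principle count of $p'/p$ on a contour slightly outside the unit circle should tie the interior zero count of $p$ to that of $(p')^*$ and force $k=0$.

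The substantive content is in the reverse direction, and the main obstacle is the degenerate case where $p$ or $p'$ has zeros \emph{on} the unit circle: the argument-principle contours then need indentation around boundary zeros, or one can replace $p$ by a small generic self-inversive perturbation $p + \epsilon q$ that preserves the hypotheses, conclude for the perturbed polynomial, and pass to the limit using continuous dependence of roots on coefficients. A more algebraic alternative would be an induction on the number of exterior zeros of $p$: the identity above, combined with self-inversivity, should allow one to peel off a paired inside/outside zero and produce a lower-degree self-inversive polynomial with one fewer exterior zero and inherited derivative constraints, reducing to the base case in which $p$ has all zeros on the unit circle.
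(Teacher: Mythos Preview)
The paper does not prove this theorem at all: it is quoted as a classical result of Cohn and simply cited, so there is no ``paper's own proof'' to compare against. I will therefore evaluate your proposal on its own merits.

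Your forward direction is complete and correct: the involution $z\mapsto 1/\bar z$ fixes the unit circle, so $p^*$ and $p$ share the same zero multiset, giving self-inversivity; Gauss--Lucas then confines the zeros of $p'$ to the closed disk.

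Your reverse direction has the right scaffolding but the finishing step is vague, and the route you propose (an argument-principle count of $p'/p$ on a contour just outside the circle, followed by perturbation or induction to handle boundary zeros) is more delicate than necessary. The identity you wrote, which for self-inversive $p=\omega p^*$ becomes
\[
n\,p(z) \;=\; z\,p'(z) \;+\; \omega\,(p')^*(z),
\]
already finishes the argument by a direct inequality, with no contour deformation or limiting procedure. Write $p'(z)=a_n\prod_j (z-\beta_j)$ with $|\beta_j|\le 1$ by hypothesis; then $(p')^*(z)=\bar a_n\prod_j(1-\bar\beta_j z)$, and the elementary identity
\[
|z-\beta|^2 - |1-\bar\beta z|^2 \;=\; (|z|^2-1)(1-|\beta|^2)
\]
gives $|p'(z)|\ge |(p')^*(z)|$ for every $|z|>1$. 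Since $p'(z)\neq 0$ there, we get $|z\,p'(z)|>|p'(z)|\ge |(p')^*(z)|$, so the two summands in the displayed identity cannot cancel and $p(z)\neq 0$ on $|z|>1$. Self-inversivity then excludes zeros in $|z|<1$ as well. This is the standard Cohn argument; your identity is exactly what is needed, but you should replace the argument-principle sketch with this one-line inequality, which also makes your discussion of boundary zeros and perturbations unnecessary.
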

A polynomial $p(z) = \sum_{k=0}^{\N} c_k z^k$ of order $\N$ is \emph{self-inversive} if
\begin{align}
\exists \epsilon \in \mathbb{C}, \forall j : c_{\N-j} = \epsilon \, c_{j}^\conj.
\end{align}

Criterion 2 of Theorem \ref{th:reciprocalCohn} is similarly difficult to verify as the original problem, however Criterion 1 gives a simple starting point to create a necessary condition on unilossless matrices. The main advantage of Criterion 1 is that it is independent from the choice of the delays.

\subsubsection{Self-Inversive Principal Minors}
We derive the conditions on $\mat{A}$ such that $p_{\mat{A},\vec{m}}$ is self-inversive.  For this, we recall Jacoby's identity for invertible matrices $\mat{A}$ \cite{brualdi1983determinantal}: Given index set $I \subset \Nset$
\begin{align}
	\det \mat{A}^{-1}(I) = \frac{\det \mat{A}(I^c)}{\det \mat{A}}.
	\label{eq:jacoby}
\end{align}
For $\mat{A}$ to be unilossless, $p_{\mat{A},\vec{m}}$ has to be self-inversive for all $\vec{m}$, hence also for $\vec{m} = [1, 2, \dots, 2^{N-1}]$ implying that $I_k$ is unique for every $k$ in \eqref{eq:cPcoefficients}. Consequently as $c_{\N} = \det \mat{A}(\emptyset) = 1$,
\begin{equation}
\begin{aligned}
	\det \mat{A} &= (-1)^{N} c_{0} = (-1)^{N} \epsilon \, c_{\N}^\conj \\
	&= \epsilon \, (-1)^{N} \det \mat{A}(\emptyset) = (-1)^{N} \epsilon. 
\end{aligned}	
\end{equation}	
and because of $I_k = I^c_{\N-k}$ and $|I^c_k|+|I_k| = N$
\begin{equation}
\begin{aligned}
	\det \mat{A}(I_k) &= \det \mat{A}(I^c_{\N-k}) = (-1)^{N-|I_{\N - k}|} c_{\N-k} = \\
	&= (-1)^{N-|I^c_k|} \epsilon \, c_{k}^\conj \\
	&= (-1)^{2N-|I^c_k|-|I_k|} \epsilon \, \det \mat{A}^\conj (I^c_{k})  \\
	&= (-1)^{3N-|I^c_k|-|I_k|} \epsilon \, \det \mat{A}^\conj \, \det \mat{A}^{-\conj}(I_k) \\
	&= (-1)^{2N} \det \mat{A}^{-\conj}(I_k) \\
	&= \det \mat{A}^{-\conj}(I_k)
	\label{eq:reciprocalPrincipalMinors}
\end{aligned}	
\end{equation}	

In other words, $p_{\mat{A},\vec{m}}$ is self-inversive for all $\vec{m}$ only if the principal minors of $\mat{A}$ are the complex conjugate principal minors of $\mat{A}^{-1}$. This holds for $\mat{A}$ with 
\begin{align}
\bold{A}^{-1} = \bold{A}^\conj \textrm{ or } \bold{A}^{-1} = \bold{A}\herm.
\label{eq:inverseMinors}
\end{align}
A matrix $\bold{A}^{-1} = \bold{A}^\conj$ is called \emph{conjugate-involutory}. In the following, we show that these two options are essentially exhaustive if we additionally allow diagonal similarity.  

\subsubsection{Diagonal Similarity and Principal Minors}
In the following, we characterize the matrices with property \eqref{eq:reciprocalPrincipalMinors}. The following theorem is build upon results by Hartfiel and Loewy \cite{Hartfiel1984, loewy1986principal}.
\begin{theorem}[]
	Let $\mat{A}$ be irreducible and invertible.	 If $\mat{A}^\conj$ and $\mat{A}^{-1}$ have equal principal minors then there exists a non-singular diagonal matrix $\mat{E}$ with either $\mat{E}^{-1} \mat{A}^\conj \mat{E} = \mat{A}^{-1}$ or $\mat{E}^{-1} \mat{A}\herm \mat{E} = \mat{A}^{-1}$.
	 \label{th:diagonalSimilarInverseGeneral}
\end{theorem}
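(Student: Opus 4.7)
The plan is to invoke a theorem of Loewy~\cite{loewy1986principal}: any two $n\times n$ irreducible matrices whose corresponding principal minors all coincide are either diagonally similar, or one is diagonally similar to the transpose of the other. Applied to the pair $(\mat{A}^\conj,\mat{A}^{-1})$ and using $(\mat{A}^\conj)\tran=\mat{A}\herm$, this dichotomy produces exactly the two alternatives stated in Theorem~\ref{th:diagonalSimilarInverseGeneral}.

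The first step is to verify Loewy's hypotheses for this pair. Equal principal minors are assumed. Irreducibility of $\mat{A}^\conj$ is automatic, since reducibility is a combinatorial property depending only on the zero/non-zero pattern of the entries, and entrywise complex conjugation preserves that pattern. The non-trivial part is irreducibility of $\mat{A}^{-1}$. Here the plan is to combine the hypothesis with Jacobi's identity~\eqref{eq:jacoby}, which expresses each principal minor of $\mat{A}^{-1}$ as $\det\mat{A}(I^c)/\det\mat{A}$. If $\mat{A}^{-1}$ were reducible, a permutation would bring it to a block-triangular form and hence force multiplicative factorisations of the type $\det\mat{A}^{-1}(I\cup J)=\det\mat{A}^{-1}(I)\det\mat{A}^{-1}(J)$ for certain disjoint index sets. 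Translating these through Jacobi and the principal-minor equality would impose the same factorisations on the minors of $\mat{A}^\conj$, and therefore on those of $\mat{A}$, contradicting the assumed irreducibility of $\mat{A}$.

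With both matrices shown to be irreducible, Loewy's theorem delivers a non-singular diagonal $\mat{E}$ with either $\mat{E}^{-1}\mat{A}^\conj\mat{E}=\mat{A}^{-1}$ or $\mat{E}^{-1}(\mat{A}^\conj)\tran\mat{E}=\mat{A}^{-1}$; the latter rewrites as $\mat{E}^{-1}\mat{A}\herm\mat{E}=\mat{A}^{-1}$, completing the proof.

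The main obstacle is establishing irreducibility of $\mat{A}^{-1}$: the zero pattern of $\mat{A}^{-1}$ is in general not controlled by that of $\mat{A}$, so the argument must really exploit the principal-minor hypothesis together with Jacobi's identity rather than any property of the digraph of $\mat{A}$ alone. A secondary non-trivial ingredient is Loewy's theorem itself, whose proof in~\cite{loewy1986principal} rests on a careful analysis of cycle products in the digraphs of the two matrices; this plan treats it as a black box, together with the related earlier results of Hartfiel--Loewy~\cite{Hartfiel1984}.
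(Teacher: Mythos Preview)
Your plan has a genuine gap: the version of Loewy's theorem in~\cite{loewy1986principal} does \emph{not} say that any two irreducible matrices with equal corresponding principal minors are diagonally similar (possibly after a transpose). For $N\geq 4$ Loewy requires, in addition to irreducibility, that for every partition of $\Nset$ into $\alpha,\beta$ with $|\alpha|,|\beta|\geq 2$ one has $\rank\mat{A}(\alpha\mid\beta)\geq 2$ or $\rank\mat{A}(\beta\mid\alpha)\geq 2$. Without this off-diagonal rank hypothesis the conclusion fails for general pairs $(\mat{A},\mat{B})$; counterexamples already appear in~\cite{Hartfiel1984}. Since Theorem~\ref{th:diagonalSimilarInverseGeneral} assumes only that $\mat{A}$ is irreducible and invertible, you cannot invoke Loewy as a black box. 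The paper's proof in Appendix~\ref{sec:ProofOfHartfielLoewyInverse} is precisely about removing this rank hypothesis in the special case $\mat{B}=\mat{A}^{-\conj}$: it handles the base case $N=4$ by applying Jacobi's identity~\eqref{eq:jacoby} to suitably chosen \emph{non-principal} $2\times 2$ minors (which relate $\mat{A}$ to $\mat{B}$ beyond what principal minors alone give), and then runs an induction using the combinatorial Lemma~6 of Hartfiel--Loewy. That extra information coming from $\mat{B}=\mat{A}^{-\conj}$, not available for an arbitrary $\mat{B}$, is the missing idea in your plan.

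A secondary remark: what you flag as the ``main obstacle'', the irreducibility of $\mat{A}^{-1}$, is in fact immediate and does not need the principal-minor hypothesis at all. If $\mat{P}^{-1}\mat{A}^{-1}\mat{P}$ were block upper triangular, then so would be its inverse $\mat{P}^{-1}\mat{A}\mat{P}$, contradicting irreducibility of $\mat{A}$. So the effort you allocate there is misplaced; the real work lies in bypassing Loewy's rank condition.
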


\begin{proof}
The proof can be found in Appendix \ref{sec:ProofOfHartfielLoewyInverse}.
\end{proof}

Theorem \ref{th:diagonalSimilarInverseGeneral} gives a characterization of \eqref{eq:inverseMinors} with additional diagonal similarity. This result can be expressed also in terms of diagonal similarity of the matrices themselves.

\begin{lemma}
	If $\mat{E}^{-1} \mat{A}^\conj \mat{E} = \mat{A}^{-1}$ then there exists a diagonal non-singular matrix $\mat{F}$ with $\mat{B} = \mat{F}^{-1} \mat{A} \mat{F}$ such that $\mat{B}^{-1} = \mat{B}^\conj$. Also, if $\mat{E}^{-1} \mat{A}\herm \mat{E} = \mat{A}^{-1}$ then there exists a diagonal non-singular matrix $\mat{F}$ with $\mat{B} = \mat{F}^{-1} \mat{A} \mat{F}$ such that $\mat{B}^{-1} = \mat{B}\herm$.
	\label{th:diagSimilarityAnalog}  
\end{lemma}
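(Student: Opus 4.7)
My plan is to reduce both statements to the problem of realising a prescribed diagonal matrix as $\mat{F}\mat{F}^{-\conj}$ (for the conjugate-involutory case) or $\mat{F}\mat{F}^\conj$ (for the Hermitian case). Writing $\mat{B}=\mat{F}^{-1}\mat{A}\mat{F}$ and using that $\mat{F}$ is diagonal so that $\mat{F}\herm=\mat{F}^\conj$, a direct computation gives
\begin{align*}
	\mat{B}^{-1}=\mat{B}^\conj &\iff (\mat{F}\mat{F}^{-\conj})\,\mat{A}^\conj\,(\mat{F}\mat{F}^{-\conj})^{-1}=\mat{A}^{-1}, \\
	\mat{B}^{-1}=\mat{B}\herm &\iff (\mat{F}\mat{F}^\conj)\,\mat{A}\herm\,(\mat{F}\mat{F}^\conj)^{-1}=\mat{A}^{-1}.
\end{align*}
Each right-hand side is a conjugation of $\mat{A}^\conj$ or $\mat{A}\herm$ by a diagonal matrix, so comparing with the hypothesis $\mat{E}^{-1}\mat{A}^\conj\mat{E}=\mat{A}^{-1}$ (resp.\ the Hermitian analogue) and invoking that irreducibility of $\mat{A}$---inherited from Theorem~\ref{th:diagonalSimilarInverseGeneral}---forces any diagonal matrix commuting with $\mat{A}^\conj$ or $\mat{A}\herm$ to be a scalar multiple of $\mat{I}$, the task reduces to exhibiting $\mat{F}$ for which $\mat{F}\mat{F}^{-\conj}=\mu\mat{E}^{-1}$ (resp.\ $\mat{F}\mat{F}^\conj=\mu\mat{E}^{-1}$) for some scalar $\mu\in\mathbb{C}$.

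For the conjugate-involutory case, the diagonal entries $f_i/f_i^\conj$ of $\mat{F}\mat{F}^{-\conj}$ lie on the unit circle, so feasibility requires $|\mu/e_i|=1$ for every $i$, i.e., the entries of $\mat{E}$ share a common modulus. I would establish this structural fact by taking the entrywise complex conjugate of the hypothesis, inverting, and substituting back to derive that $\mat{E}\mat{E}^\conj$ commutes with $\mat{A}^\conj$; irreducibility then yields $\mat{E}\mat{E}^\conj=\lambda\mat{I}$, hence $|e_i|=\sqrt{\lambda}$. Choosing $\mu=\sqrt{\lambda}$ and writing $\mu/e_i=e^{\imag\phi_i}$, a valid $\mat{F}$ is obtained by setting $f_i=e^{\imag\phi_i/2}$, and the target identity $\mat{B}^{-1}=\mat{B}^\conj$ follows by reversing the reduction.

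The Hermitian case is structurally parallel, but now the diagonal entries $|f_i|^2$ of $\mat{F}\mat{F}^\conj$ are strictly positive real, so the analogous feasibility condition is that $\mu/e_i>0$ for every $i$. The analogous manipulation of the hypothesis shows that $\mat{E}^{-1}\mat{E}^\conj$ commutes with $\mat{A}$ and hence equals $\nu\mat{I}$ with $|\nu|=1$; writing $\nu=e^{\imag\beta}$, each $e_i$ takes the form $s_i\,e^{-\imag\beta/2}$ with $s_i\in\mathbb{R}$. Choosing $\mu$ to have argument $-\beta/2$ then turns $\mu/e_i$ into a real number, after which $|f_i|=\sqrt{\mu/e_i}$ yields the required $\mat{F}$.

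The main obstacle I foresee is the sign step in the Hermitian case: the relation $\mat{E}^\conj=\nu\mat{E}$ only forces each $e_i$ to be a real multiple of $e^{-\imag\beta/2}$, and those real multiples could a priori have mixed signs, making $\mu/e_i$ fail to be positive for some $i$. Addressing this will require an additional reduction from the hypothesis---most plausibly showing that the diagonal sign matrix built from the signs of the $s_i$ commutes with $\mat{A}$ and is therefore $\pm\mat{I}$ by irreducibility---so that the signs are in fact uniform. Once this is in hand the construction closes and both identities follow by unwinding the equivalences above.
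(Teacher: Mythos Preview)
Your route is quite different from the paper's, and the difference is instructive. The paper's argument is a two–line verification: it simply \emph{declares} a diagonal $\mat{F}$ satisfying $\mat{F}^\conj\mat{F}^{-1}=\mat{E}$ (first clause) or $\mat{F}^{-\conj}\mat{F}^{-1}=\mat{E}$ (second clause), writes $\mat{B}=\mat{F}^{-1}\mat{A}\mat{F}$, and checks $\mat{B}^\conj\mat{B}=\mat{I}$ (resp.\ $\mat{B}\herm\mat{B}=\mat{I}$) by substituting $\mat{A}^\conj\mat{E}\mat{A}=\mat{E}$ (resp.\ $\mat{A}\herm\mat{E}\mat{A}=\mat{E}$). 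No irreducibility, no uniqueness-up-to-scalar, no structural analysis of $\mat{E}$. What you are doing---reducing to $\mat{F}\mat{F}^{-\conj}=\mu\mat{E}^{-1}$ or $\mat{F}\mat{F}^{\conj}=\mu\mat{E}^{-1}$ and then constraining $\mat{E}$ via irreducibility---is considerably more elaborate, but it is also more honest: the paper never checks that an $\mat{F}$ with $\mat{F}^\conj\mat{F}^{-1}=\mat{E}$ (entries on the unit circle) or $\mat{F}^{-\conj}\mat{F}^{-1}=\mat{E}$ (entries in $\mathbb{R}_{>0}$) actually exists for the given $\mat{E}$. Your argument for the first clause, showing $\mat{E}\mat{E}^\conj$ is scalar so that $\mat{E}$ can be normalised to be unimodular, is exactly what is needed to close that gap.

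The obstacle you flag in the Hermitian clause, however, is not repairable by the device you suggest. Your plan is to show that the sign matrix $\mat{\Sigma}=\diag{\sgn s_i}$ commutes with $\mat{A}$ and is hence $\pm\mat{I}$; but this fails already for the irreducible matrix
\[
\mat{A}=\begin{bmatrix}\cosh t & \sinh t\\ \sinh t & \cosh t\end{bmatrix},\qquad \mat{E}=\begin{bmatrix}1&0\\0&-1\end{bmatrix},\qquad t\neq 0,
\]
which satisfies $\mat{E}^{-1}\mat{A}\herm\mat{E}=\mat{A}^{-1}$ with $s_1=1$, $s_2=-1$, yet $\mat{\Sigma}=\mat{E}$ does not commute with $\mat{A}$. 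Worse, this $\mat{A}$ has eigenvalues $e^{\pm t}$, so it cannot be diagonally similar to any unitary matrix at all: the second clause of the lemma is simply false as a free-standing statement, and the paper's short proof shares the same defect (no real diagonal $\mat{F}$ has $\mat{F}^{-\conj}\mat{F}^{-1}=\diag{1,-1}$). In the paper's logical flow this does no harm---the lemma is used only to rephrase the disjunctive conclusion of Theorem~\ref{th:diagonalSimilarInverseGeneral}, and Theorem~\ref{th:unilosslessisUnitary} is proved directly from Theorem~\ref{th:diagonalSimilarInverseGeneral}---but you should not expect your sign argument to close, because no argument can.
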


\begin{proof}
Let $\mat{B} = \mat{F}^{-1} \mat{A} \mat{F}$ with $\mat{F}^\conj \mat{F}^{-1} = \mat{E}$. Given $\mat{E}^{-1} \mat{A}^\conj \mat{E} = \mat{A}^{-1}$, then equivalently $\mat{A}^\conj \mat{E} \mat{A} = \mat{E}$ and 	
\begin{equation}
\begin{aligned}
		\mat{B}^\conj \, \mat{B} &= \mat{F}^{-\conj} \mat{A}^\conj \mat{F}^\conj \mat{F}^{-1} \mat{A} \mat{F} 
		= \mat{F}^{-\conj} \mat{A}^\conj \mat{E} \mat{A} \mat{F} \\
		&= \mat{F}^{-\conj} \mat{E} \mat{F} = \mat{I}.
\end{aligned}
	\end{equation}

Let $\mat{B} = \mat{F}^{-1} \mat{A} \mat{F}$ with $\mat{F}^{-\conj} \mat{F}^{-1} = \mat{E}$. Given $\mat{E}^{-1} \mat{A}\herm \mat{E} = \mat{A}^{-1}$, then equivalently $\mat{A}\herm \mat{E} \mat{A} = \mat{E}$ and 	
\begin{equation}
\begin{aligned}
		\mat{B}\herm \, \mat{B} &= \mat{F}^{\conj} \mat{A}\herm \mat{F}^{-\conj} \mat{F}^{-1} \mat{A} \mat{F} 
		= \mat{F}^{\conj} \mat{A}\herm \mat{E} \mat{A} \mat{F} \\
		&= \mat{F}^{\conj} \mat{E} \mat{F} = \mat{I}.
\end{aligned}
\end{equation}
	
\end{proof}

Because of Theorem~\ref{th:diagonalSimilarInverseGeneral} and Lemma~\ref{th:diagSimilarityAnalog}, any irreducible and invertible matrices having self-inversive principal minors are either diagonally similar unitary or conjugate-involutory matrices. More precisely, if $\mat{B}$ is a unilossless and irreducible matrix, then there is a non-singular matrix $\mat{E}$ with $\mat{A} = \mat{E}^{-1} \mat{B} \mat{E}$ such that $\mat{A}$ satisfies \eqref{eq:inverseMinors}.

\subsubsection{Irreducible Unilossless is Necessarily Diagonally Similar Unitary}
Theorem \ref{th:diagonalSimilarInverseGeneral} shows that any irreducible unilossless matrix is either diagonally similar to a unitary or conjugate-involutory matrix. The following theorem shows that conjugate-involutory, but non-unitary matrices cannot be unilossless.

\begin{theorem}
	Let $\mat{A}$ be an irreducible unilossless matrix then there exists a non-singular diagonal matrix $\mat{E}$ with $\mat{A} \mat{E} \mat{A}\herm = \mat{E}$.
	\label{th:unilosslessisUnitary}
\end{theorem}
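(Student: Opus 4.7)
The proof hinges on the classification results accumulated just above the theorem. From unilosslessness, Lemma~\ref{th:losslessUnimodular} forces the eigenvalues of $\mat{A}$ to be unimodular, so $\mat{A}$ is invertible; Cohn's Theorem~\ref{th:reciprocalCohn} together with the derivation leading to \eqref{eq:reciprocalPrincipalMinors} shows that the principal minors of $\mat{A}^\conj$ and $\mat{A}^{-1}$ coincide. Irreducibility then lets us invoke Theorem~\ref{th:diagonalSimilarInverseGeneral}: there exists a non-singular diagonal matrix $\mat{E}_0$ with either (U) $\mat{E}_0^{-1}\mat{A}\herm\mat{E}_0 = \mat{A}^{-1}$, or (C) $\mat{E}_0^{-1}\mat{A}^\conj\mat{E}_0 = \mat{A}^{-1}$.

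Case (U) is immediate. By Lemma~\ref{th:diagSimilarityAnalog} there is a diagonal $\mat{F}$ so that $\mat{B} = \mat{F}^{-1}\mat{A}\mat{F}$ is unitary. Left-multiplying $\mat{B}\mat{B}\herm = \mat{I}$ by $\mat{F}$ and right-multiplying by $\mat{F}\herm$ yields $\mat{A}\,(\mat{F}\mat{F}\herm)\,\mat{A}\herm = \mat{F}\mat{F}\herm$, so the positive real non-singular diagonal matrix $\mat{E} = \mat{F}\mat{F}\herm$ witnesses the conclusion.

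Case (C) is the substantive part: Lemma~\ref{th:diagSimilarityAnalog} only produces a diagonal $\mat{F}$ with $\mat{B} = \mat{F}^{-1}\mat{A}\mat{F}$ conjugate-involutory, $\mat{B}\mat{B}^\conj = \mat{I}$, which need not be unitary. Since diagonal similarity preserves principal minors and therefore $p_{\mat{A},\vec{m}}(z)$, $\mat{B}$ inherits unilosslessness from $\mat{A}$. The plan is to show that an irreducible unilossless conjugate-involutory matrix must in addition satisfy $\mat{B}\mat{B}\herm = \mat{I}$, so case (C) collapses into case (U). Concretely, for any such $\mat{B}$ with $\mat{B}\mat{B}\herm \neq \mat{I}$ I would try to construct a delay vector for which $p_{\mat{B},\vec{m}}$ fails to have all its roots on the unit circle: using exponentially spaced delays $\vec{m} = [1,2,4,\ldots,2^{N-1}]$ so that Theorem~\ref{th:charPoly} makes each coefficient equal, up to sign, to a single principal minor of $\mat{B}$, and exploiting the fact that non-unitarity forces some singular value of $\mat{B}$ different from $1$ and hence some $|\det \mat{B}(I)| > 1$. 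The rotation-along-unit-circle result Theorem~\ref{th:RotationAlongUnitCircle} (explicitly foreshadowed by the authors as important for the main theorem) would be applied to align the phases of the dominant coefficients before isolating a root off the unit circle.

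The principal obstacle is making this coefficient/root argument rigorous. Self-inversiveness alone does not force $|\det \mat{B}(I)| \le 1$, and the norm-preservation argument used in the proof of Lemma~\ref{th:AnyUnitaryIsUnilossless} does not extend directly to the conjugate-involutory setting, since $\mat{B}\mat{B}\herm$ is no longer the identity. Establishing the precise quantitative link between the singular values of $\mat{B}$ and the locations of the roots of $p_{\mat{B},\vec{m}}(z)$ is where the bulk of the work lies; once case (C) has been reduced to case (U), the argument of the second paragraph delivers the required $\mat{E}$.
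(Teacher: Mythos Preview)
Your setup and Case (U) are fine and match the paper. Case (C) is indeed the crux, and here your proposal is a plan rather than a proof: you hope to exhibit delays $\vec m$ for which a non-unitary conjugate-involutory $\mat{B}$ has $p_{\mat{B},\vec m}$ with a root off the unit circle, via some singular-value bound on the principal-minor coefficients, but you do not carry this out and you concede that the link between singular values and root locations is missing. (The intermediate claim that non-unitarity forces some $|\det\mat{B}(I)|>1$ is also not obvious for conjugate-involutory matrices, whose row vectors need not have unit $\ell_2$-norm.)

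The paper does not attempt any root-location estimate at all. Staying with $\mat{A}$ under the assumption $\mat{A}\mat{E}\mat{A}^\conj=\mat{E}$, it fixes $\vec m$ with pairwise distinct entries and a unimodular $\gamma$ with $\gamma^{m_i}\neq\gamma^{m_j}$, sets $\mat{B}=\mat{A}\,\mat{D}_{\vec m}(\gamma)$, and invokes Theorem~\ref{th:RotationAlongUnitCircle} to see that $p_{\mat{B},\vec m}$ is again lossless. The dichotomy then applies to $\mat{B}$: either $\mat{B}\mat{E}\mat{B}\herm=\mat{E}$ or $\mat{B}\mat{E}\mat{B}^\conj=\mat{E}$. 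In the first subcase one computes $\mat{B}\mat{E}\mat{B}\herm=\mat{A}\,\mat{D}(\gamma)\,\mat{E}\,\mat{D}(\gamma)\herm\mat{A}\herm=\mat{A}\mat{E}\mat{A}\herm$, so $\mat{A}\mat{E}\mat{A}\herm=\mat{E}$ directly. In the second subcase one now has \emph{two} diagonal conjugations taking $\mat{A}$ to $\mat{A}^{-\conj}$, namely by $\mat{E}$ and by $\mat{D}(\gamma)\mat{E}$; writing out the $(i,j)$ entry of each gives
\[
\frac{e_i}{e_j}\,a_{ij}=a'_{ij}=\frac{\gamma^{m_i}e_i}{\gamma^{m_j}e_j}\,a_{ij},
\qquad\text{hence}\qquad
\Bigl(\gamma^{m_i-m_j}-1\Bigr)\frac{e_i}{e_j}\,a_{ij}=0,
\]
forcing $a_{ij}=0$ for $i\neq j$. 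Then $\mat{A}$ is diagonal, $\mat{A}^\conj=\mat{A}\herm$, and the conclusion $\mat{A}\mat{E}\mat{A}\herm=\mat{E}$ follows trivially. So Theorem~\ref{th:RotationAlongUnitCircle} is not used to ``align phases of dominant coefficients'' as you propose, but to manufacture a second similarity relation whose comparison with the first is a purely algebraic, entry-by-entry calculation; no analysis of root magnitudes enters at any point.
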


\begin{proof}
	Let  $\mat{A}$ be a irreducible and unilossless matrix. From Theorem \ref{th:diagonalSimilarInverseGeneral} follows that there exists a non-singular diagonal matrix $\mat{E}$ with $\mat{A} \mat{E} \mat{A}^\conj = \mat{E}$ or $\mat{A} \mat{E} \mat{A}\herm = \mat{E}$.
	\begin{enumerate}
		\item Assume $\mat{A} \mat{E} \mat{A}\herm = \mat{E}$: \checkmark
		\item Assume $\mat{A} \mat{E} \mat{A}^\conj = \mat{E}$: 
	 	
	 	Given a unimodular $\gamma$ and $\vec{m}$ with $m_i \neq m_j$ for all $i \neq j$. With Theorem \ref{th:RotationAlongUnitCircle} it follows $p_{\mat{A},\vec{m}}(z)$ is lossless if and only if $p_{\mat{A}\mat{D}(\gamma),\vec{m}}(z)$ is lossless. We write $\mat{B} = \mat{A}\mat{D}(\gamma)$. If $p_{\mat{B},\vec{m}}(z)$ is lossless, then $\mat{B} \mat{E} \mat{B}^\conj = \mat{E}$ or $\mat{B} \mat{E} \mat{B}\herm = \mat{E}$.
	 	\begin{enumerate}
	 		\item Assume $\mat{B} \mat{E} \mat{B}\herm = \mat{E}$:
	 			\begin{align}
	 			\mat{E} = \mat{B} \mat{E} \mat{B}\herm = \mat{A} \mat{D}(\gamma) \mat{E} \mat{D}\herm(\gamma) \mat{A}\herm = \mat{A}  \mat{E}  \mat{A}\herm
	 			\end{align}
	 		\item Assume $\mat{B} \mat{E} \mat{B}^\conj = \mat{E}$:
	 	\begin{equation}
	 	\begin{aligned}
	 		\mat{B} \mat{E} \mat{B}^\conj &= \mat{E} \\
	 		\mat{E}^{-1} \mat{B} \mat{E} &= \mat{B}^{-\conj} \\
	 		(\mat{D}(\gamma)\mat{E})^{-1} \mat{A} \mat{D}(\gamma) \mat{E} &= \mat{A}^{-\conj}
		 \end{aligned}
		 \label{eq:rotatedLosslessConjugateInvolutory}
	 	\end{equation}
	 	
	 	We write $\mat{A}^{-\conj} = [a'_{ij}]_{N\times N}$. Because of Assumption 2, 
	 	\begin{align}
	 		\mat{E}^{-1} \mat{A} \mat{E} &= \mat{A}^{-\conj}
	 	\end{align}
	 	we write
	 	\begin{align}
	 		\frac{e_i}{e_j} a_{ij} = a'_{ij}.
	 	\end{align}
	 		 	
	 	From \eqref{eq:rotatedLosslessConjugateInvolutory}, we have
	 	\begin{align}
	 		\frac{\gamma^{m_i} e_i}{\gamma^{m_j} e_j} a_{ij} = a'_{ij}.
	 	\end{align}
	 	And together, this gives
	 	\begin{align}
	 		\left(\frac{\gamma^{m_i}}{\gamma^{m_j}}-1\right) \frac{e_i}{e_j} a_{ij} = 0.
	 	\end{align}
	 	As $\gamma^{m_i} \neq \gamma^{m_j}$ and $e_i \neq 0$, we have $a_{ij} = 0$ for all $i \neq j$. Then $\mat{A}$ is a diagonal matrix, such that $\mat{A}^\conj = \mat{A}\herm$, and therefore $\mat{A} \mat{E} \mat{A}\herm = \mat{E}$.
	 	\end{enumerate}
	 	\end{enumerate}
\end{proof}
Given any matrix $\mat{B}$, it is possible to test algorithmically whether $\mat{B}$ is diagonally similar to a unitary matrix and to determine the diagonal similarity matrix. More details can be found in \cite{Engel:2006bl}. 

\subsection{Proof of the Main Result}
Figure~\ref{fig:theoremDependency} sketches the dependencies of the theorems we have proven so far for the irreducible matrices $\mat{A}$ and the relation to the main result Theorem \ref{th:sufficientNecessaryUnilossless}. We conclude this section with the proof of Theorem \ref{th:sufficientNecessaryUnilossless}.

\begin{proof}[Proof of Theorem \ref{th:sufficientNecessaryUnilossless}]
	Both cases can be treated independently.
	\begin{enumerate}
		\item If $\mat{A}$ is reducible, the result follows from Theorem \ref{th:ReducibleUnilossless}.
		\item Let $\mat{A}$ be irreducible. If $\mat{A}$ is unilossless then there is a non-singular diagonal matrix $\mat{E}$ with $\mat{A} \mat{E} \mat{A}\herm = \mat{E}$ (see Theorem \ref{th:unilosslessisUnitary}). On the contrary, any unitary matrix $\mat{B}$ is unilossless (see Lemma \ref{th:AnyUnitaryIsUnilossless}). For any non-singular diagonal matrix $\mat{F}$, $\mat{A} = \mat{F}^{-1} \mat{B} \mat{F}$ is unilossless (see Theorem \ref{th:conjugateDiagonalPMInvariance}). Let $\mat{E} = \mat{F}^{-1}\mat{F}^{-\conj}$ and finally
			\begin{equation}
	 		\begin{aligned}
				\mat{A} \mat{E} \mat{A}\herm 
				= \mat{F}^{-1} \mat{B} \mat{F} \mat{E} \mat{F}^{\conj} \mat{B}\herm \mat{F}^{-\conj} 
				= \mat{E}.
			\end{aligned}
		 	\end{equation}
	\end{enumerate}
\end{proof}

Theorem \ref{th:sufficientNecessaryUnilossless} fully characterizes the set of unilossless matrices. The following section discusses briefly some aspects of non-unilossless matrices and their connection to lossless FDNs.
 
\subsection{Lossless, but not Unilossless}
This section discusses some aspects of lossless FDNs which are not based on unilossless feedback matrices. For the FDN to be lossless, following Theorem \ref{th:reciprocalCohn}, $p_{\mat{A},\vec{m}}(z)$ needs to be self-inversive and therefore $\mat{A}$ is characterized by Theorem \ref{th:diagonalSimilarInverseGeneral}. In the following, a parameterization of the unitary and conjugate-involutory matrices are given. 

\subsubsection{Characterizing the Unitary and Conjugate-Involutory Matrices}
Here we would like to give a characterization for all matrices $\mat{A}$ with $\mat{A}\mat{A}\herm = \mat{I}$ or $\mat{A}\mat{A}^\conj = \mat{I}$. Both classes can be identified with the exponential map. It is well known that for any matrix with $\mat{A}\mat{A}\herm = \mat{I}$, there is a skew-symmetric matrix $\mat{M}$ such that $\mat{A} = \exp(\mat{M})$ \cite{golub2012matrix}. A similar map can be shown for any matrix with $\mat{A}\mat{A}^\conj = \mat{I}$. Namely, there is a pure imaginary matrix $\mat{M}$ such that $\mat{A} = \exp(\mat{M})$. A proof is given in Appendix \ref{sec:ParameterizationConjugateInvolutory}. 

It can be seen that for any eigenvalue $\lambda_{\mat{A}}$ of $\mat{A}$ there is an eigenvalue $\lambda_{\mat{M}}$ of $\mat{M}$ with $\lambda_{\mat{A}} = \exp (\lambda_{\mat{M}})$. Therefore, $\mat{A}$ has only unimodular eigenvalues if and only if $\mat{M}$ has only pure imaginary eigenvalues. Interestingly, skew-symmetric matrices have only pure imaginary eigenvalues, whereas pure imaginary matrices clearly do not need to have only pure imaginary eigenvalues. The question when a pure imaginary matrix has only pure imaginary eigenvalues is a rather difficult question and to the best of our knowledge no simple criterion exists.

\subsubsection{On Lossless $2\times2$ FDNs}
We give some illustrations on matrices $\mat{A}$ which can constitute a lossless FDN for certain values of delays $\vec{m}$. Given the delays $\vec{m} = [m_1, m_2]$ and a $2\times2$ matrix $\mat{A}$, following Theorem \ref{th:charPoly}, the generalized characteristic polynomial is:
\begin{eqnarray}
	p_{\mat{A},\vec{m}}(z) = z^{m_1+m_2} - a_{22} z^{m_1} - a_{11} z^{m_2} + \det \mat{A}. 		
\end{eqnarray} 
Since $p_{\mat{A},\vec{m}}(z)$ is necessarily self-inversive, we have $\det \mat{A} = \epsilon$ with $|\epsilon| = 1$ and $a_{22}^\conj = \epsilon^\conj a_{11}$. The off-diagonal elements $a_{12}$ and $a_{21}$ have no direct influence on the generalized characteristic polynomial except the determinant.

If $\mat{A}$ is unitary, then $|a_{11}| \leq 1$ and $p_{\mat{A},\vec{m}}(z)$ is lossless for any $\vec{m}$. Nonetheless there are certain choices of delays $\vec{m}$ such that $p_{\mat{A},\vec{m}}(z)$ is lossless even with $|a_{11}|>1$. For example, for $m_1 = 1$, $m_2 = 2$:
\begin{equation}
\begin{aligned}
	p_{\mat{A},\vec{m}}(z) &= z^3 - a_{11} z^2 - \epsilon a_{11}^\conj z + \epsilon \\
	p'_{\mat{A},\vec{m}}(z) &= 3 z^2 - 2 a_{11} z - \epsilon a_{11}^\conj. 
\end{aligned}	
\end{equation}
The roots of $p'_{\mat{A},\vec{m}}(z)$ are
\begin{align}
	r_{1/2} = \frac{a_{11} \pm \sqrt{a_{11}^2 + 3 \epsilon a_{11}^\conj} }{3}.
\end{align}
According to Theorem \ref{th:reciprocalCohn},  $p_{\mat{A},\vec{m}}(z)$ is lossless if and only if all roots of $p'_{\mat{A},\vec{m}}(z)$ are on or within the unit circle, i.e., $|r_{1/2}| \leq 1$. For $\epsilon = -1$ and $a_{11} = 3$, the roots $r_{1/2} = 1$ and the corresponding FDN is lossless. This coincides with the example \eqref{eq:startCounterExamplePositiveCase} given in the beginning. Figure 
	\ref{fig:StabilityPolytopesFor2x2_det-1} depicts the stability region of $a_{11}$, i.e., the set of values $a_{11}$ for which $p_{\mat{A},\vec{m}}(z)$ is lossless under fixed $\det \mat{A}$ and $\vec{m}$. This plot shows the outer border of the stability region for $\vec{m} = [1, k]$ and $\det \mat{A} = -1$. It can be observed that for higher $k$, the stability regions approximate the unit circle, which is guaranteed by the unilossless condition. Also, this analysis illustrates that the practice of employing non-unilossless matrices $\mat{A}$ for the FDN design results in rather subtle and difficult determination of the lossless property. This can become especially difficult for long delays as the determination of the system poles suffers from numerical intractability.

\begin{figure} 
	\centering 
	\includegraphics[width=0.5\textwidth]{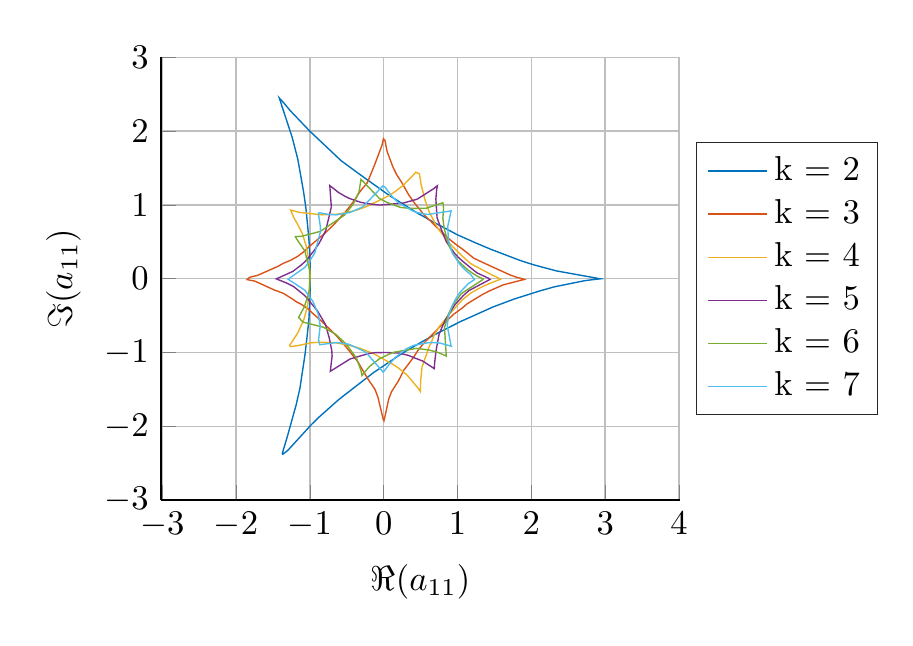} 
	\caption{Stability region of $a_{11}$ with $\det \mat{A} = -1$ and $\vec{m} = [1, k]$ for which $p_{\mat{A},\vec{m}}(z)$ is lossless. The depicted lines indicate the outer border of the valid set of values.} 
	\label{fig:StabilityPolytopesFor2x2_det-1}
\end{figure} 

\section{Unilossless Reverb Topologies}
\label{sec:LosslessTopologies}

Various topologies for delay based reverberation algorithms have been proposed over the last 50 years. As the delay lines can be reordered and the feedback paths can be summarized in an adjacency matrix, any of these topologies can be represented by an FDN structure. Whereas the computational efficiency of the original structure may be better than a general matrix computation, the system analysis benefits from a standardized representation.  In the following, we discuss some of the classic examples and their FDN versions.
	
\subsection{ Schroeder's Parallel Comb and Serial Allpass  }
In \cite{schroeder1962natural}, Schroeder presented an early topology for a delay based network. It follows a two stage design: four parallel comb filters to create the reverberation tail and then two serial allpass filters to increase the reflection density.

We define the main building block: the feedback-feedforward comb filter
\begin{equation}
C_{m, g_b, g_f}(z) = \frac{z^{-m} - g_f}{1 - g_b\, z^{-m}}.
\end{equation}
where $m$ is the delay in samples, $g_b$ and $g_f$ are the feedback and feedforward gains, respectively.
Comb allpass filters are feedback-feedforward comb filters with identical feedback and feedforward gains, i.e., $C_{m,g}(z) = C_{m, g, g}(z)$. The filter $C_{m,g}(z)$ has a unit magnitude response and therefore called allpass. The Schroeder structure is then given by
\begin{align}
	H_S(z) = \left( \sum_{i=1}^4 C_{m_i, g_{i}, 0}(z) \right) \prod_{j=5}^{6} C_{m_j,g_j}(z).
\end{align}	
The corresponding feedback matrix is 
\begin{align}
	\mat{A}_S = 
	\begin{bmatrix}
	g_1 & 0 & 0 & 0 & 0 & 0 \\
	0 & g_2 & 0 & 0 & 0 & 0 \\
	0 & 0 & g_3 & 0 & 0 & 0 \\
	0 & 0 & 0 & g_4 & 0 & 0 \\
	1 & 1 & 1 & 1 & g_5 & 0 \\
	-g_5 & -g_5 & -g_5 & -g_5 & 1-g_5^2 & g_6 	
	\end{bmatrix} \\
	\vec{c}_S = [ g_5 g_6, g_5 g_6, g_5 g_6, g_5 g_6, g_6(g_5^2 - 1), 1 - g_6^2]
\end{align}
with $\vec{m} = [m_1, \dots, m_6]$. The triangular feedback matrix $\mat{A}_S$ has the block matrix form
\begin{align}
	\mat{A}_S = 
	\begin{bmatrix}
	g_1 &  		&   &   &   	& \vec{0}\tran \\
	  	& g_2 	&  &   &  	& \vec{0}\tran \\
	  	&   	& g_3 &   &   & \vec{0}\tran \\
	  	&   &   & g_4 &   & \vec{0}\tran \\
	  	&   &   &   & g_5 & \vec{0}\tran \\
	\vec{g}'_1 & \vec{g}'_2 & \vec{g}'_3 	& \vec{g}'_4 & \vec{g}'_5 & g_6 
	\end{bmatrix} 
\end{align}
such that according to \eqref{eq:permutationCojugate}, $\mat{A}_S$ is reducible. According to  Condition 1 of Theorem~\ref{th:sufficientNecessaryUnilossless} is a reducible matrix unilossless if and only if the irreducible components $g_i$ are unilossless, i.e., $|g_i| = 1$ for all $i$. However for real $g_i$, the output from the $5^{th}$ and $6^{th}$ delays is zero, which is consistent to the fact that a lossless allpass filter cannot have a decaying tail.

\subsection{Dahl and Jot's Absorbent Allpass FDN}

\begin{figure}[!tb]
	\centering 
	\begin{subfigure}[]{0.48\textwidth}
		\includegraphics[width=\textwidth]{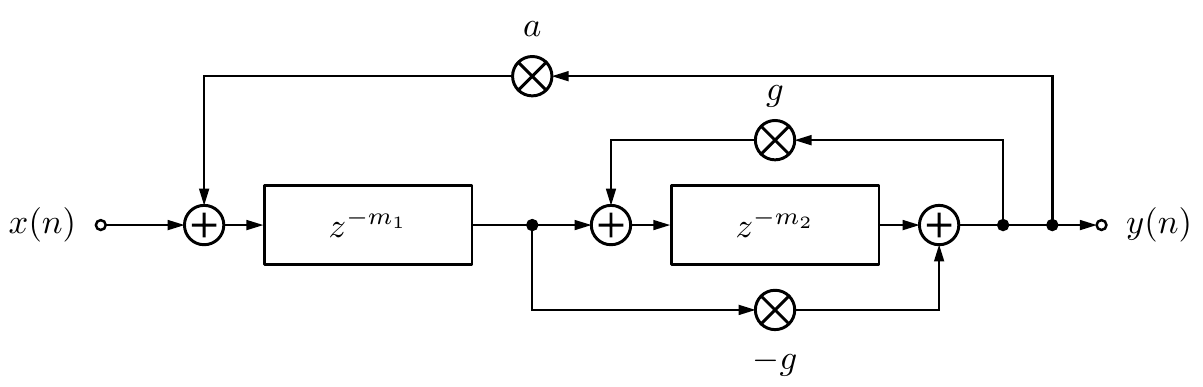}	
		\subcaption{Allpass delay Line with feedback}	
		\label{fig:allpassFeedbackLoop}
	\end{subfigure}
	\begin{subfigure}[]{0.48\textwidth}
		\includegraphics[width=\textwidth]{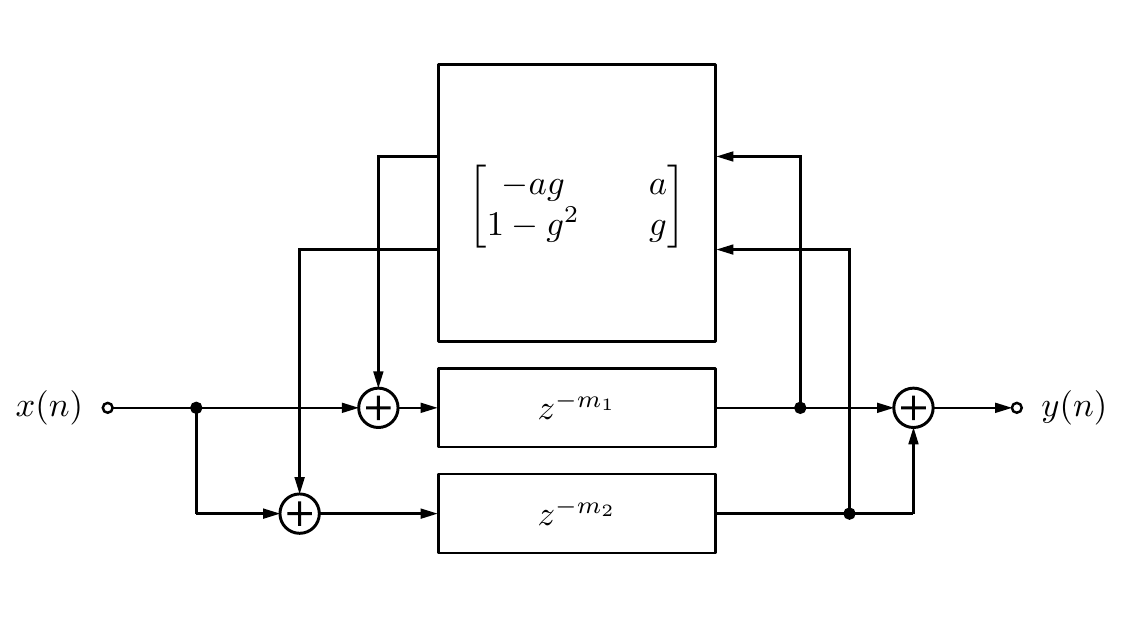}	
		\subcaption{Equivalent standard FDN}	
		\label{fig:allpassFeedbackFDN}
	\end{subfigure}
	\caption{An allpass filter in series with a delay filter in a feedback loop with feedback gain $a$ and the equivalent representation as a standard FDN.}
	\label{fig:allpassFeedback}
\end{figure}

The absorbent allpass FDN suggested in \cite{Dahl2000} extends every delay of the FDN by a feedback-feedforward allpass filter. Similar structures have been discussed also earlier \cite{Dattorro1997, VaananenEfficient97}. The open feedback loop is described by
\begin{align}
	\mat{D}_{\vec{m}}(z^{-1}) \mat{C}_{\vec{m'},\vec{g}}(z^{-1}) - \mat{A},
\end{align}
where $\mat{C}_{\vec{m'},\vec{g}}(z) = \diag{[ C_{m'_1,g_1}(z), \dots, C_{m'_N,g_N}(z) ]}$ is a diagonal matrix of allpass filters. In \cite{schlecht2014modulation}, the authors showed that any allpass FDN can be represented as a standard FDN with twice the delay lines and the feedback matrix
\begin{align}
	\mat{A}_{\textrm{AP}} =& \begin{bmatrix}-\mat{A}\,\mat{G}  & \mat{A}   \\ \mat{I} - \mat{G}^2 & \mat{G}\end{bmatrix},
\end{align}	
where $\mat{G} = \diag{[g_1, \dots, g_N]}$ and $\vec{m}_{AP} = [\vec{m}, \vec{m}']$. Figure \ref{fig:allpassFeedback} depicts a single delay line allpass FDN and an equivalent two delay line standard FDN. It has been shown in \cite{Dahl2000} that the allpass FDN is lossless if all $|g_i| \leq 1$. This can be confirmed by demonstrating that $\mat{A}_{\textrm{AP}}$ is diagonally similar to a unitary matrix:
\begin{equation}
	\begin{aligned}
	\mat{A}_{\textrm{AP}} =& \begin{bmatrix}-\mat{A}\,\mat{G}  & \mat{A}   \\ \mat{I} - \mat{G}^2 & \mat{G}\end{bmatrix} \\
	=& \begin{bmatrix} \mat{I}  & \mat{0}   \\ \mat{0} & \mat{G}' \end{bmatrix} 
	\begin{bmatrix}-\mat{A}\,\mat{G}  & \mat{A} \mat{G}'  \\ \mat{G}' & \mat{G}\end{bmatrix} 
	\begin{bmatrix} \mat{I}  & \mat{0}   \\ \mat{0} & \mat{G}'^{-1} \end{bmatrix},
	\label{eq:allpassMatrixN_diagonallysimilar}		
	\end{aligned}
\end{equation}
where $\mat{G}' = \sqrt{ \mat{I} - \mat{G}^2}$ and
\begin{align}
	\begin{bmatrix}-\mat{A}\,\mat{G}  & \mat{A} \mat{G}'  \\ \mat{G}' & \mat{G}\end{bmatrix} 
	= \begin{bmatrix} \mat{A}  & \mat{0} \\ \mat{0} & \mat{I} \end{bmatrix}  \begin{bmatrix}-\mat{G}  & \mat{G}' \\ \mat{G}' & \mat{G} \end{bmatrix} 
\end{align}
is unitary. Consequently, $\mat{A}_{\textrm{AP}}$ is diagonally similar to a unitary matrix and irreducible such that according to the Condition 2 of Theorem~\ref{th:sufficientNecessaryUnilossless} $\mat{A}_{\textrm{AP}}$ is unilossless. Whereas the implementation complexity might be beneficial for the allpass FDN, this analysis suggests that in terms of the sonic design of reverberation, unilossless FDNs offers the full potential of delay based systems. 

\subsection{ De Sena's Scattering Delay Network }
The scattering delay network (SDN) proposed in \cite{DeSena:2015bb, deSena2011scattering} extends the core structure of the FDN and draws the parameterization of the filter coefficients from a room geometry based analysis. Two different feedback matrix types are employed:
\begin{align}
	\mat{A}_{S1} = \frac{2}{\langle \vec{1}, \vec{y} \rangle} \vec{1} \vec{y}\tran - \mat{I}
	\textrm{ and }
	\mat{A}_{S2} = \frac{2}{\norm{\vec{y}}_2^2} \vec{y} \vec{y}\tran - \mat{I}
\end{align}
where $\vec{1} = [1, \dots, 1]\tran$, $\vec{y} = [y_1, \dots, y_N]\tran$ and $\langle \cdot, \cdot \rangle$ denotes the scalar product. Matrix $\mat{A}_{S2}$ is a Householder reflection, which is unitary and therefore a unilossless matrix (see Lemma \ref{th:AnyUnitaryIsUnilossless}). Matrix $\mat{A}_{S1}$ satisfies
\begin{align}
	\mat{A}_{S1}\herm \mat{Y} \mat{A}_{S1} = \mat{Y},
\end{align}
where $\mat{Y} = \diag{\vec{y}}$.  Consequently, $\mat{A}_{S1}$ is diagonally similar to a unitary matrix and according to the Condition 2 of Theorem.~\ref{th:sufficientNecessaryUnilossless}, $\mat{A}_{S1}$ is a unilossless feedback matrix. The matrix $\mat{A}_{S1}$ is therefore an example of an explicitly non-unitary, but unilossless matrix in application.

\section{Conclusion}
The present work proposed a new characterization of lossless FDNs based on unilossless feedback matrices, which are independent from the choice of the delays. Further, a sufficient and necessary condition for unilossless feedback matrices was given, which sets the complete framework for any FDN design based on lossless prototypes. Additionally, some aspects of lossless FDNs with non-unilossless feedback matrices were discussed. The proposed framework was applied to three existing artificial reverberation topologies suggesting a unified analysis approach of delay based physical modeling topologies based on FDNs.

\appendices
\section{Flawed Proof in \cite{Rocchesso96} }
\label{sec:FailedProofRocchesso}
By assuming that each delay line is longer than two samples, \eqref{eq:jotdifference} can be restated in the state-space description as follows \cite{Rocchesso96}:
\begin{equation}
\begin{aligned}
y(n) &=& \vec{\stateSpace{c}}^T \, \vec{\stateSpace{s}}(n) + {\stateSpace{d}} \, x(n) \\
\vec{\stateSpace{s}}(n+1) &=& \A \, \vec{\stateSpace{s}}(n) + \vec{\stateSpace{b}} \, x(n)
\label{eq:jotss}
\end{aligned}
\end{equation}
where
\begin{eqnarray}
{\stateSpace{d}} &=& d \\
\vec{\stateSpace{b}}^T &=& [0, ..., 0, \bold{b}^T] \\
\vec{\stateSpace{c}}^T &=& [0, ... , 0, \bold{c}^T, \underbrace{0, ..., 0}_{N}].
\end{eqnarray}

The state-transition matrix is
\begin{eqnarray} \label{eq:systembigmatrix}
\A &=& \begin{bmatrix} 
		\bold{U}_1 & 0 & 0 & \cdots & 0 & 0 & \bold{R}_1 \\ 
		0 & \bold{U}_2 & 0 & \cdots & 0 & 0 & \bold{R}_2 \\
		\vdots & \vdots & \vdots & \ddots & \vdots & \vdots & \vdots \\
		0 & 0 & 0 & \cdots & \bold{U}_N & 0 & \bold{R}_N \\
		\bold{P}_1 & \bold{P}_2 & \bold{P}_3 & \cdots & \bold{P}_N & 0 & 0 \\
		0 & 0 & 0 & \cdots & 0 & \bold{A} & 0 \\
		\end{bmatrix}
\end{eqnarray}
where
\begin{eqnarray}
\bold{U}_j	&=&	\begin{bmatrix} 
		0 & 1 & 0 & \cdots & 0 \\ 
		0 & 0 & 1 & \cdots & 0 \\
		\vdots & \vdots & \vdots & \ddots & \vdots \\
		0 & 0 & 0 & \cdots & 1 \\
		0 & 0 & 0 & \cdots & 0 \\
		\end{bmatrix}
\end{eqnarray}
\begin{eqnarray}
\bold{R}_j	&=&	\begin{bmatrix} 
		0 \cdots 0 & \cdots &  0 & \cdots & 0 \\ 
		0 \cdots 0 & \cdots &  0 & \cdots & 0 \\ 
		\vdots & \ddots & \vdots & \ddots & \vdots \\
		\underbrace{0 \cdots 0}_{j-1} & 1 & 0 & \cdots & 0 \\
		\end{bmatrix} 
\end{eqnarray}
and
\begin{eqnarray}
\bold{P}_j	&=&	\begin{matrix}
  \coolleftbrace{j-1}{ 0 \\ \vdots \\0 } \\ \vphantom{0} \\  \vphantom{0} \\ \vphantom{\vdots} \\ \vphantom{0} \\
  \end{matrix}%
  \begin{bmatrix} 0 & 0 & \cdots & 0 \\ 
		\vdots & \vdots & \ddots & \vdots \\
		0 & 0 & \cdots & 0 \\
		1 & 0 & \cdots & 0 \\
		0 & 0 & \cdots & 0 \\
		\vdots & \vdots & \ddots & \vdots \\
		0 & 0 & \cdots & 0 \\ 
  \end{bmatrix}%
\end{eqnarray}
Let us define
\begin{eqnarray}
\bold{U}_j &\in& \mathbb{C}^{(m_j -2)\times (m_j -2)} \\
\bold{R}_j &\in& \mathbb{C}^{(m_j -2)\times N}\\
\bold{P}_j &\in& \mathbb{C}^{N\times(m_j -2)}. 
\end{eqnarray}
Given the so-called generalized admittance
\begin{align}
	\widetilde{\mat{\Gamma}} = 
	\begin{bmatrix}
	\mat{I}_{\N-N} & \mat{0} \\ \mat{0} & \mat{\Gamma} 	
	\end{bmatrix},
\end{align}
where $\mat{\Gamma} \in \mathbb{C}^{N\times N} $ is a Hermitian, positive-definite matrix, we have
\begin{align}
	\A\tran \widetilde{\mat{\Gamma}} \A = \diag{[ \mat{I}_{m_1-2}, \dots, \mat{I}_{m_N-2}, \mat{B}]}.
\end{align}
The proof fails in (29) of \cite{Rocchesso96} as
\begin{align}
	\mat{B} = 
	\begin{bmatrix}
		\mat{A}^T \mat{\Gamma} \mat{A} & \mat{0} \\ \mat{0} & \mat{I}
	\end{bmatrix}
	\textrm{ instead of }
	\mat{B} = 
	\begin{bmatrix}
		\mat{I} & \mat{0} \\ \mat{0} & \mat{A}^T \mat{\Gamma} \mat{A}
	\end{bmatrix}.
\end{align}

\section{Proof of Theorem \ref{th:diagonalSimilarInverseGeneral} }
\label{sec:ProofOfHartfielLoewyInverse}

We first recall two main results on which the proof of Theorem~\ref{th:diagonalSimilarInverseGeneral} is based upon.
 
\begin{theorem}[Hartfiel, \cite{Hartfiel1984}]
	Let $N=2$ or $N=3$ and $\mat{A}$ is irreducible. Then if $\mat{A}$ and $\mat{B}$ have equal principal minors then there is a non-singular diagonal matrix $\mat{E}$ with either $\mat{E}^{-1} \mat{A} \mat{E} = \mat{B}$ or $\mat{E}^{-1} \mat{A} \mat{E} = \mat{B}\tran$. 
\label{th:Hartfiel23}
\end{theorem}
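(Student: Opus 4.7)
My plan is to search for a non-singular diagonal matrix $\mat{E} = \diag{[e_1,\dots,e_N]}$ such that $\mat{E}^{-1}\mat{A}\mat{E}$ coincides with either $\mat{B}$ or $\mat{B}\tran$. Conjugation by $\mat{E}$ fixes the diagonal and rescales the $(i,j)$-entry by $e_j/e_i$, while equality of the $1\times 1$ principal minors already forces $a_{ii}=b_{ii}$, so the problem reduces to matching off-diagonal entries. The necessary and sufficient condition for such an $\mat{E}$ to exist is that, around every directed cycle in the support digraph of $\mat{A}$, the product $\prod b_{ij}/a_{ij}$ equals one (and analogously for $\mat{B}\tran$), so the proof will consist of deriving these cocycle conditions from the equality of principal minors.

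For $N=2$ the argument is immediate: equality of principal minors yields $a_{11}=b_{11}$, $a_{22}=b_{22}$, and $a_{12}a_{21}=b_{12}b_{21}$; irreducibility forces all four off-diagonal entries to be non-zero on both sides, so setting $e_1/e_2 := b_{12}/a_{12}$ also sends $a_{21}$ to $b_{21}$ by the determinant identity, yielding $\mat{E}^{-1}\mat{A}\mat{E}=\mat{B}$ directly without ever needing the transpose case.

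For $N=3$ I would first read off $a_{ij}a_{ji}=b_{ij}b_{ji}$ from the three $2\times 2$ principal minors, and then use the $3\times 3$ determinant identity to deduce $p+q = p'+q'$, where $p=a_{12}a_{23}a_{31}$, $q=a_{13}a_{21}a_{32}$ are the two $3$-cycle products for $\mat{A}$ and $p',q'$ the analogous ones for $\mat{B}$. Multiplying the three $2\times 2$ identities gives $pq=p'q'$, so as multisets $\{p,q\}=\{p',q'\}$. In the case $p=p'$ and $q=q'$ the $3$-cycle cocycle conditions are trivially satisfied (the $2$-cycle conditions being automatic from the $2\times 2$ minors), so the ratios $e_j/e_i$ can be consistently solved for and give $\mat{E}^{-1}\mat{A}\mat{E}=\mat{B}$. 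In the case $p=q'$, $q=p'$, passing from $\mat{B}$ to $\mat{B}\tran$ swaps its two cycle products and reduces to the previous case, giving $\mat{E}^{-1}\mat{A}\mat{E}=\mat{B}\tran$.

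The main obstacle I anticipate is the case where some off-diagonal entries vanish. The principal-minor identities will force the zero pattern of $\mat{B}$ (or of $\mat{B}\tran$) to match that of $\mat{A}$, and the cocycle argument must then be carried out on the restricted support digraph rather than the complete one. Frobenius' characterization of irreducibility as strong connectivity of that digraph guarantees that enough independent cocycle relations survive the degeneracies to still determine the $e_i$ up to a global scale, but the bookkeeping is essentially what restricts the theorem to $N\leq 3$; a clean handling of this zero-pattern case analysis will be the technically delicate step.
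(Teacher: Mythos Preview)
The paper does not actually prove this statement: Theorem~\ref{th:Hartfiel23} is quoted from Hartfiel~\cite{Hartfiel1984} as an external result and used as a black box in the proof of Theorem~\ref{th:diagonalSimilarInverseGeneral}, so there is no in-paper argument to compare against. Your outline is essentially the standard cycle-product argument that underlies Hartfiel's proof: match diagonals via $1\times 1$ minors, match $2$-cycle products via $2\times 2$ minors, and for $N=3$ use the determinant together with $pq=p'q'$ to conclude $\{p,q\}=\{p',q'\}$, the swap being absorbed by transposition.

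Two points of detail. First, in the $N=2$ step your ratio is inverted: with $(\mat{E}^{-1}\mat{A}\mat{E})_{ij}=(e_j/e_i)a_{ij}$, matching the $(1,2)$-entry requires $e_2/e_1=b_{12}/a_{12}$, not $e_1/e_2$; this is cosmetic. Second, and more substantively, your handling of vanishing off-diagonal entries is where the real content lies, and the sentence ``the principal-minor identities will force the zero pattern of $\mat{B}$ (or of $\mat{B}\tran$) to match that of $\mat{A}$'' is not yet justified. From $a_{ij}a_{ji}=b_{ij}b_{ji}=0$ you only learn that one of $b_{ij},b_{ji}$ vanishes, not which; and if, say, $a_{13}=a_{32}=0$ then both $3$-cycle products $p,q$ vanish and the symmetric-function trick gives no information. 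Pinning down the zero pattern in such degenerate configurations requires combining irreducibility (strong connectivity of the support digraph) with the minor identities case by case, which is exactly the bookkeeping Hartfiel carries out and the reason the clean statement stops at $N=3$.
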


Consequently for $N < 4$, all matrices $\mat{A}^{-1}$ with property \eqref{eq:reciprocalPrincipalMinors} are diagonally similar to $\mat{B} = \mat{A}^\conj$ or $\mat{B}\tran = \mat{A}\herm$. For higher $N$, an additional condition has to be posed on $\mat{A}$ for the general case.

\begin{theorem}[Loewy, \cite{loewy1986principal}]
	Let $N \geq 4$ and $\mat{A}$ is irreducible, and for every partition of $\{1,2,...,N\}$ into subsets $\alpha, \beta$ with $|\alpha|\geq 2, |\beta| \geq 2$ either $\rank \mat{A}(\alpha|\beta) \geq 2$ or $\rank \mat{A}(\beta|\alpha) \geq 2$.
	
	 Then if $\mat{A}$ and $\mat{B}$ have equal principal minors then there is a non-singular diagonal matrix $\mat{E}$ with either $\mat{E}^{-1} \mat{A} \mat{E} = \mat{B}$ or $\mat{E}^{-1} \mat{A} \mat{E} = \mat{B}\tran$. 
	 \label{th:loewyDS}
\end{theorem}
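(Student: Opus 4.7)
The plan is to reconstruct $\mat{E}$ entry by entry from the equalities of low-order principal minors of $\mat{A}$ and $\mat{B}$, and then invoke the rank hypothesis to rule out the only obstruction to a globally consistent choice between $\mat{B}$ and $\mat{B}\tran$. First I would extract diagonal information from $1\times 1$ minors, giving $a_{ii}=b_{ii}$ for all $i$. Then from every $2\times 2$ principal minor, $a_{ii}a_{jj}-a_{ij}a_{ji}=b_{ii}b_{jj}-b_{ij}b_{ji}$, so that
\begin{align}
a_{ij}a_{ji}=b_{ij}b_{ji}\quad\text{for all }i\neq j.
\end{align}
This already implies that the directed graphs of non-zero entries of $\mat{A}$ and $\mat{B}$ have the same underlying undirected support. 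Since $\mat{A}$ is irreducible this support graph is strongly connected.

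Next I would construct candidate diagonal entries $e_i$ by traversing a spanning structure on the support graph. On each edge $(i,j)$ with $a_{ij}\neq 0$ there are a priori two local options consistent with the $2\times 2$ relation: either $b_{ij}=(e_i/e_j)\,a_{ij}$ (the ``non-transposed'' branch) or $b_{ij}=(e_i/e_j)\,a_{ji}$ (the ``transposed'' branch). The $3\times 3$ principal minors supply cycle identities of the form $a_{ij}a_{jk}a_{ki}=b_{ij}b_{jk}b_{ki}$ and $a_{ik}a_{kj}a_{ji}=b_{ik}b_{kj}b_{ji}$ on each triangle of the support graph; these force the two branches to be chosen compatibly around every $3$-cycle. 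Combined with strong connectivity and $2\times 2$ consistency, this suffices to propagate a choice of branch along any walk and thus to define a single non-singular diagonal $\mat{E}$ on each strongly connected component of the support graph after a ``transpose/no-transpose'' label is fixed there.

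The final and hardest step is to show that only one global label is possible, i.e.\ that one cannot have $\mat{E}^{-1}\mat{A}\mat{E}$ agreeing with $\mat{B}$ on some indices but with $\mat{B}\tran$ on others. This is exactly where the rank hypothesis is essential. Suppose for contradiction that the index set partitions as $\alpha\cup\beta$ with $|\alpha|,|\beta|\ge 2$ such that the similarity behaves as ``non-transposed'' on $\alpha$ and ``transposed'' on $\beta$. One then writes the $|\alpha\cup\beta|\times|\alpha\cup\beta|$ principal minors of $\mat{A}$ and $\mat{B}$ in block form and uses the generalized Laplace (Cauchy--Binet) expansion along the rows $\alpha$ and columns $\beta$. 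If both off-diagonal blocks $\mat{A}(\alpha|\beta)$ and $\mat{A}(\beta|\alpha)$ had rank $\le 1$, the resulting identities would all reduce to scalar relations that are symmetric under the exchange $\mat{A}\leftrightarrow\mat{A}\tran$ inside the block, so the mixed labeling could not be detected. The rank $\ge 2$ hypothesis on at least one block produces a $2\times 2$ submatrix of the off-diagonal block with non-zero minor, which contributes a genuinely non-symmetric term to a suitable $(|\alpha|{+}2)$- or $(|\beta|{+}2)$-dimensional principal minor; matching this against the corresponding minor of $\mat{B}$ gives a contradiction to the mixed labeling. I expect this last reduction, establishing that a rank-$2$ coupling rigidly excludes a split labeling for all admissible partitions simultaneously, to be the main technical obstacle and the place where the four-index threshold $N\ge 4$ genuinely enters.
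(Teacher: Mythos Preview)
The paper does not prove this theorem; it is quoted from Loewy~\cite{loewy1986principal} and used as an input to the proof of Theorem~\ref{th:diagonalSimilarInverseGeneral} in Appendix~\ref{sec:ProofOfHartfielLoewyInverse}. So there is no in-paper proof to compare your attempt against. What the appendix does contain is the inductive mechanism of Hartfiel--Loewy (stated there as Lemma~6) that reduces the $N\times N$ problem to $(N{-}1)\times(N{-}1)$ principal submatrices, together with a direct $N=4$ base case --- but this is carried out only for the specialized situation $\mat{B}=\mat{A}^{-\conj}$, not for general $\mat{B}$ as in Loewy's theorem.

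Independently of that, your sketch has a genuine gap at the $3\times 3$ step. From equality of the principal minor on $\{i,j,k\}$, after cancelling the diagonal entries and the $2\times 2$ products you obtain only
\begin{align}
a_{ij}a_{jk}a_{ki}+a_{ik}a_{kj}a_{ji}=b_{ij}b_{jk}b_{ki}+b_{ik}b_{kj}b_{ji},
\end{align}
the \emph{sum} of the two oriented $3$-cycle products, not each one separately as you claim. That sum is precisely symmetric under the local swap $\mat{B}\leftrightarrow\mat{B}\tran$, so $3$-cycle data alone does not force a consistent branch choice around every triangle. A related slip: $a_{ij}a_{ji}=b_{ij}b_{ji}$ does not force equal undirected supports when one of $a_{ij},a_{ji}$ vanishes --- the identity then reads $0=0$ and says nothing about which of $b_{ij},b_{ji}$ is nonzero. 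Both degeneracies have to be resolved (typically by bringing in $4\times 4$ and higher minors together with irreducibility) before you even have a well-defined global ``transpose/non-transpose'' labeling for the rank hypothesis to act on. Your last paragraph correctly locates where the rank condition should enter, but the scaffolding leading up to it does not yet stand.
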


The following example shows that the additional rank property in Theorem~\ref{th:loewyDS} is not necessary. The matrix 
\begin{equation}
	\mat{A}=\frac{1}{5}
\begin{bmatrix}
 -1 & 4 & -2 & -2 \\
 -4 & 1 & 2 & 2 \\
 2 & 2 & -1 & 4 \\
 -2 & -2 & -4 & 1 \\
\end{bmatrix}
\end{equation}
is orthogonal and for this $\mat{A}$ and ${\mat{A}^{-}}\tran$ are diagonally similar despite the fact that the rank property is violated by $\rank \mat{A}(1 2 | 3 4) = 1$. The following shows that the rank property can be dropped in Theorem~\ref{th:loewyDS} for the special case of $\mat{B} = \mat{A}^{-*}$. Please refer to the original proof in \cite{Hartfiel1984} for additional technical information on the given proof. Similarly to the proof of Theorem~\ref{th:loewyDS}, the following proof is by induction on the matrix dimension $N$, where as the base case is $N = 4$. The following lemma, which is a combination of Lemma~1 and 4 in \cite{Hartfiel1984}, provides the inductive step:
\begin{lemma}
	Let $N \geq 4$ and suppose $\mat{A}, \mat{B} \in \mathbb{C}^{N\times N}$ are irreducible. Let
	\begin{equation*}
		\begin{aligned}
			S_1 &= \cbracket{i : 1 \leq i \leq N : \mat{B}(i^c) \textrm{ is diagonally similar to } \mat{A}(i^c)} \\
			S_2 &= \cbracket{i : 1 \leq i \leq N : \mat{B}(i^c)\tran \textrm{ is diagonally similar to }  \mat{A}(i^c)}. \\
		\end{aligned}
	\end{equation*}
	Then,
	\begin{enumerate}
		\item if $|S_1| \geq 3$, then $\mat{B}$ is diagonally similar to $\mat{A}$;
		\item if $|S_2| \geq 3$, then $\mat{B}$ is diagonally similar to $\mat{A}$;
		\item if $|S_1 \bigcup S_2 | \geq 5$, then either $\mat{B}$ or $\mat{B}\tran$ is diagonally similar to $\mat{A}$.
	\end{enumerate}
\end{lemma}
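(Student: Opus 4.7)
The plan is to handle Parts 1 and 2 with a common construction that stitches three local diagonal similarities on overlapping $(N-1)\times(N-1)$ principal submatrices into a single global similarity on $\mat{A}$, and to derive Part 3 from them by a simple pigeonhole argument.

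For Part 1, pick three distinct indices $i_1, i_2, i_3 \in S_1$ with non-singular diagonal matrices $\mat{D}_{i_k}$ satisfying $\mat{D}_{i_k}^{-1}\mat{A}(i_k^c)\mat{D}_{i_k} = \mat{B}(i_k^c)$. The key preliminary step is a uniqueness/compatibility statement: whenever an irreducible conjugation $\mat{D}^{-1}\mat{X}\mat{D}=\mat{Y}$ holds, the diagonal $\mat{D}$ is determined up to a non-zero scalar multiple, because the graph of non-zero entries of $\mat{X}$ is connected and each edge yields a relation $d_j / d_i = x_{ij}/y_{ij}$. I would use this, applied through the full matrix $\mat{A}$ whose irreducibility supplies the needed connectivity, to show that after rescaling the diagonal entries of $\mat{D}_{i_1}$ and $\mat{D}_{i_2}$ agree on the common index set $\{i_1,i_2\}^c$.

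Having aligned the two restrictions, I would define a single diagonal matrix $\mat{D}$ on $\{1,\dots,N\}$ by taking the entry at $i_1$ from $\mat{D}_{i_2}$, the entry at $i_2$ from $\mat{D}_{i_1}$, and the remaining entries from either. By construction $\mat{D}^{-1}\mat{A}\mat{D}$ and $\mat{B}$ agree on every entry except possibly the off-diagonal pair at positions $(i_1,i_2)$ and $(i_2,i_1)$. These two entries lie in $\mat{A}(i_3^c)$, so the third similarity $\mat{D}_{i_3}$, rescaled to match $\mat{D}$ on the overlap $\{i_1,i_2,i_3\}^c$, fixes them, yielding $\mat{D}^{-1}\mat{A}\mat{D}=\mat{B}$. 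Part 2 is the identical argument with $\mat{B}$ replaced by $\mat{B}\tran$. For Part 3, inclusion-exclusion gives $|S_1| + |S_2| \geq |S_1 \cup S_2| \geq 5$, so at least one of $|S_1|, |S_2|$ is at least $3$, and Part 1 or Part 2 applies.

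The main obstacle I anticipate is the compatibility step in Part 1. The $(N-2)\times(N-2)$ overlap submatrix $\mat{A}(\{i_1,i_2\}^c)$ need not be irreducible even when $\mat{A}$ is, so a direct application of the uniqueness-up-to-scalar lemma on the overlap can fail. Resolving this requires propagating the rescaling constraint through the full matrix, exploiting non-zero entries of $\mat{A}$ that cross into rows or columns $i_1$ or $i_2$, and this graph-theoretic bookkeeping is where the requirement of three rather than two indices becomes essential.
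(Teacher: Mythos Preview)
The paper does not actually prove this lemma; it quotes it as ``a combination of Lemma~1 and~4 in \cite{Hartfiel1984}'' and uses it as a black box for the inductive step of Theorem~\ref{th:diagonalSimilarInverseGeneral}. So there is no in-paper proof to compare against; the relevant benchmark is the Hartfiel--Loewy argument, and your stitching framework is exactly in that spirit. Your derivation of Part~3 from Parts~1 and~2 by pigeonhole is correct and is how it is done.

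Where your plan is still genuinely incomplete is the compatibility step you yourself flag. Your proposed fix---propagating constraints through non-zero entries of $\mat{A}$ that cross into row or column $i_1$ or $i_2$---does not work as stated. An entry $a_{i_1 j}$ with $j\in\{i_1,i_2\}^c$ lies in $\mat{A}(i_2^c)$ but not in $\mat{A}(i_1^c)$, so it constrains $(\mat{D}_{i_2})_{i_1}$ against $(\mat{D}_{i_2})_j$, but says nothing about $\mat{D}_{i_1}$ at all (which has no $i_1$-component). Symmetrically for entries through $i_2$. So such entries cannot by themselves align $\mat{D}_{i_1}$ and $\mat{D}_{i_2}$ on the overlap $\{i_1,i_2\}^c$ when $\mat{A}(\{i_1,i_2\}^c)$ is reducible. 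In Hartfiel--Loewy the third index $i_3$ is not merely a final patch for the two stray entries; $\mat{D}_{i_3}$ is defined at both $i_1$ and $i_2$ and is what lets you bridge separate strongly connected pieces of the overlap by routing through those two vertices. You should rework the argument so that all three local similarities participate in establishing compatibility, not just two of them. One minor point: your reduction of Part~2 yields that $\mat{B}\tran$ is diagonally similar to $\mat{A}$, which is almost certainly the intended conclusion (consistent with Part~3); the paper's statement of Part~2 appears to contain a typo.
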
 
Let $N \geq 5$. The inductive assumption provides that for irreducible $\mat{A}, \mat{B} \in \mathbb{C}^{N-1 \times N-1}$, either  $\mat{B}$ or $\mat{B}\tran$ is diagonally similar to $\mat{A}$. Hence, for sets $S_1$ and $S_2$ as defined in Lemma~6, $| S_1 \bigcup S_2 | \geq 5$. Therefore, the induction is valid by Lemma~6. It remains to proof the base case $N = 4$.

Let $N = 4$. The proof of Theorem~4 in \cite{Hartfiel1984} applies Theorem~\ref{th:Hartfiel23} to the $3 \times 3$ principal submatrices and arrives at the following representation based solely on $\mat{A}$ being irreducible and $\mat{A}$ and $\mat{B}$ having equal corresponding principal minors and \emph{without} exploiting the additional rank property:

\begin{align}
\begin{aligned}
		\mat{A} &= \begin{bmatrix}
		a_{11} & a_{12} & a_{13} & a_{14} \\
		a_{21} & a_{22} & a_{23} & a_{24} \\
		a_{13} f_3 & \frac{a_{23} f_3}{f_2} & a_{33} & a_{34} \\
		a_{14} e_4 & \frac{a_{24} e_4}{f_2} & a_{43} & a_{44} \\
	\end{bmatrix}
\end{aligned}
\end{align}
and
\begin{align}
\begin{aligned}
	 {\mat{B}} &= \begin{bmatrix}
		a_{11} & \frac{a_{21}}{f_2} & a_{13} & a_{14} \\
		a_{12}f_2 & a_{22} & a_{23} & a_{24} \\
		a_{13}f_3 & \frac{a_{23}f_3}{f_2} & a_{33} & a_{34} \\
		a_{14}e_4 & \frac{a_{24}e_4}{f_2} & a_{43} & a_{44} \\
	\end{bmatrix}, \\
	 {{\mat{B}}}\tran &= \begin{bmatrix}
		a_{11} & {a_{12}}{f_2} & a_{13} f_3 & a_{14} e_4 \\
		\frac{a_{21}}{ f_2 } & a_{22} & \frac{a_{23} f_3}{f_2} & \frac{a_{24} e_4}{f_2} \\
		a_{13} & a_{23} & a_{33} & a_{43} \\
		a_{14} & a_{24} & a_{34} & a_{44} \\
	\end{bmatrix},
	\label{eq:startCondition}
\end{aligned}
\end{align}
where $f_2, f_3, e_3, e_4 \neq 0$ are coefficients from diagonal similarity transformations. Because $\mat{B}$ is by construction diagonally similar to $\mat{A}^{-*}$, Jacoby's identity \eqref{eq:jacoby} yields
\begin{equation}
\begin{aligned}
	\det \mat{A}( 1 3 | 2 4 ) &= \det \mat{B} ( 2 4 | 1 3 ) \\
	\det \begin{bmatrix}
		a_{12} & a_{14} \\
		\frac{a_{23}f_3}{f_2} & a_{34}
	\end{bmatrix} &= 
	\det \begin{bmatrix}
		a_{12} f_2 & a_{23} \\
		a_{14}e_4 & a_{43}
	\end{bmatrix} \\
	a_{12} a_{34} - a_{14} \frac{a_{23}f_3}{f_2} &=
	a_{12} f_2 a_{43}  - a_{23} a_{14}e_4 \\
	a_{12} ( a_{34} - f_2 a_{43} )  &=
	a_{14} a_{23} \paren{ \frac{f_3}{f_2} - e_4 } 
\end{aligned}
\end{equation}
and 
\begin{equation}
\begin{aligned}
	\det \mat{A}( 2 4 | 1 3 ) &= \det \mat{B} ( 1 3 | 2 4 ) \\
	\det \begin{bmatrix}
		a_{21} & a_{23} \\
		a_{14}e_4 & a_{43}
	\end{bmatrix} &= 
	\det \begin{bmatrix}
		\frac{a_{21}}{ f_2} & a_{14} \\
		\frac{a_{23}f_3}{ f_2} & a_{34}
	\end{bmatrix} \\
	a_{21} a_{43} - a_{23} a_{14}e_4 &=
	\frac{a_{21}}{ f_2} a_{34}  - a_{14} \frac{a_{23}f_3}{ f_2} \\
	a_{21} ( a_{43} - \frac{ a_{34} }{ f_2 } )  &=
	- a_{14} a_{23} \paren{ \frac{f_3}{f_2} - e_4 } .
\end{aligned}
\end{equation}
The combination of both identities is then
\begin{equation}
	\begin{aligned}
		a_{12} ( a_{34} - f_2 a_{43} )  &= - a_{21} ( a_{43} - \frac{ a_{34} }{ f_2 } )  \\
		a_{12} f_2 ( a_{34} - f_2 a_{43} )  &=  a_{21}  ( a_{34} - f_2 a_{43} )  \\
		(a_{12} f_2 - a_{21})( a_{34} - f_2 a_{43} )  &=  0 .
	\end{aligned}
\end{equation}
If the first term $(a_{12} f_2 - a_{21}) = 0$, then $\mat{A} = \mat{B}$. If the second term $( a_{34} - f_2 a_{43} ) = 0$, then $\mat{B}\tran = \diag{ 1, f_2, f_3, e_4}^{-1} \mat{A} \diag{ 1, f_2, f_3, e_4}$. Thus, the matrices $\mat{A}$ and $\mat{B}$ or $\mat{B}\tran$ are diagonally similar.

\section{Parameterization of Conjugate-Involutory Matrices} 
\label{sec:ParameterizationConjugateInvolutory}
\begin{lemma}
	For any $\mat{A}$ with $\mat{A}\mat{A}^\conj=\mat{I}$, there is a real matrix $\mat{M}$ with $\mat{A} = \exp(\imath \mat{M})$, where $\imag = \sqrt{-1}$.
	\label{th:ParameterizationConjugateInvolutory}
\end{lemma}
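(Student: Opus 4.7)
The plan is to exhibit a matrix logarithm $\mat{L}$ of $\mat{A}$ that is purely imaginary, whence $\mat{M} := -\imag\mat{L}$ will be real and satisfy $\mat{A}=\exp(\imag\mat{M})$. Since $|\det\mat{A}|^2 = \det(\mat{A}\mat{A}^\conj) = 1$, the matrix $\mat{A}$ is invertible and a complex matrix logarithm exists via the primary-matrix (holomorphic) functional calculus; the task is to pick the right one.

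First I would extract structure from the identity $\mat{A}^{-1} = \mat{A}^\conj$. The Jordan structure of $\mat{A}^{-1}$ is obtained from that of $\mat{A}$ by the map $\lambda\mapsto 1/\lambda$, while the Jordan structure of $\mat{A}^\conj$ is obtained by $\lambda\mapsto\bar\lambda$; equating the two forces the Jordan canonical form of $\mat{A}$ to be invariant under the involution $\lambda\mapsto 1/\bar\lambda$, so that every Jordan block $J_k(\lambda)$ of $\mat{A}$ is paired with a block $J_k(1/\bar\lambda)$ of the same size (the two coincide exactly when $|\lambda|=1$). Next, I would choose a scalar value $\log\lambda$ at each distinct eigenvalue such that
\begin{align}
	\log(1/\bar\lambda) = -\overline{\log\lambda}
\end{align}
holds throughout the spectrum of $\mat{A}$: for each non-self-paired pair $\{\lambda,1/\bar\lambda\}$ one value can be picked freely and the other is then forced, while for self-paired eigenvalues ($|\lambda|=1$) one takes $\log\lambda$ purely imaginary.

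With these branch choices fixed, I define $\mat{L} := f(\mat{A})$ via the primary matrix function associated to $f=\log$. Because primary matrix functions are polynomials in $\mat{A}$, one has $\mat{L}^\conj = \bar f(\mat{A}^\conj)$ with $\bar f(z) := \overline{f(\bar z)}$; combining this identity with $\mat{A}^\conj = \mat{A}^{-1}$ and the pairing prescription above should force $\mat{L}^\conj = -\mat{L}$, so $\mat{L}$ is purely imaginary and $\mat{M} := -\imag\mat{L}$ is the desired real matrix. The main obstacle will be making this last step airtight when $\mat{A}$ has nontrivial Jordan blocks: the Jordan bases of $\mat{A}$ and $\mat{A}^{-1}$ are related by a nontrivial similarity, so one must verify that the higher-order Hermite-interpolation data for $\log$ at paired eigenvalues $\lambda$ and $1/\bar\lambda$ are compatible. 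A direct fallback that sidesteps the branch-cut subtleties at negative real eigenvalues is to work blockwise in the Jordan basis: for each matched pair of blocks $(J_k(\lambda),J_k(1/\bar\lambda))$ assemble an explicit $2k\times 2k$ real matrix whose image under $\mat{N}\mapsto\exp(\imag\mat{N})$ recovers the pair, and do the analogous $k\times k$ construction for each self-paired block.
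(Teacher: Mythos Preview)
Your proposal is correct but takes a genuinely different route from the paper. The paper splits $\mat{A}=\mat{U}+\imag\mat{V}$ into real and imaginary parts, observes that $\mat{A}\mat{A}^\conj=\mat{I}$ is equivalent to $\mat{U}\mat{V}=\mat{V}\mat{U}$ together with $\mat{U}^2+\mat{V}^2=\mat{I}$, and then invokes the existence of a real $\mat{M}$ with $\mat{U}=\cos\mat{M}$ and $\mat{V}=\sin\mat{M}$, whence $\mat{A}=\exp(\imag\mat{M})$. That argument is short and conceptually appealing---it is the matrix analogue of the scalar fact that $u^2+v^2=1$ forces $(u,v)=(\cos\theta,\sin\theta)$---but its last step is asserted rather than proved. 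Your approach via a primary matrix logarithm with branch values coordinated under the spectral involution $\lambda\mapsto 1/\bar\lambda$ uses heavier machinery but is essentially self-contained. In fact the Jordan-block worry you flag dissolves: since every branch of $\log$ has the same derivatives $f^{(j)}(z)=(-1)^{j-1}(j-1)!/z^{j}$, the two functions $z\mapsto -f(z)$ and $z\mapsto \bar f(1/z)$ differ locally by a constant, and your zeroth-order branch condition $\log(1/\bar\lambda)=-\overline{\log\lambda}$ already pins that constant to zero at each eigenvalue, so the full Hermite data match automatically and $\mat{L}^\conj=-\mat{L}$ follows without the blockwise fallback. (That fallback would in any case be awkward, since Jordan bases are complex and reassembling a \emph{real} $\mat{M}$ from them requires extra bookkeeping.)
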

\begin{proof}
	Let $\mat{M}=\mat{U}+\imath \mat{V}$, where $\mat{U}$, $\mat{V}$ are real. $\mat{A}\mat{A}^\conj=\mat{I}$ is equivalent to $\mat{U}^2+\mat{V}^2=\mat{I}$ and $\mat{U}\mat{V}=\mat{V}\mat{U}$. Consequently, there is a real matrix $\mat{M}$ with $\mat{U}=\cos(\mat{M})$ and $\mat{V} = \sin(\mat{M})$ such that $\mat{A}=\exp(\imath \mat{M})$.	
\end{proof}

\section*{Acknowledgment}
The authors would like to thank a number of researchers for their helpful comments and discussions: G. H. E. Duchamp for the alternative proof of Lemma \ref{th:AnyUnitaryIsUnilossless}, V. Protsak for pointing out the diagonal similarity invariance, R. Bryant and L. Blanc for the proof of Lemma \ref{th:ParameterizationConjugateInvolutory}, R. Loewy and D. J. Hartfiel for helpful comments on their work and D. Rocchesso for his fundamental contribution to FDN design and supportive discussion of the present idea.


\end{document}